\theoremstyle{definition}
\newtheorem{theorem}{Theorem}
 \newtheorem{proposition}{Proposition}
 \newtheorem{corollary}{Corollary}
 \newtheorem{lemma}{Lemma}
 \newtheorem{claim}{Claim}
\newtheorem{theorem}{Theorem}[section]
 \newtheorem{definition}[theorem]{Definition}
 \newtheorem{example}{Example}[section]
\newenvironment{proof}[1][Proof]{\begin{trivlist}
\item[\hskip \labelsep {\bfseries #1}]}{\end{trivlist}}
\newcommand{\qed}{\nobreak \ifvmode \relax \else
      \ifdim\lastskip<1.5em \hskip-\lastskip
      \hskip1.5em plus0em minus0.5em \fi \nobreak
      \vrule height0.75em width0.5em depth0.25em\fi}
\newlength{\minitwocolumn}
\newcommand{\beq}{\begin{equation}}
\newcommand{\eeq}{\end{equation}}
\newcommand{\bea}{\begin{eqnarray*}}
\newcommand{\eea}{\end{eqnarray*}}
\newcommand{\beqa}{\begin{eqnarray}}
\newcommand{\eeqa}{\end{eqnarray}}
\def\bR{{\mathbb{R}}}
\def\bZ{{\mathbb{Z}}}
\newcommand{\calH}{{\mathcal H}}
\newcommand{\calL}{{\mathcal L}}
\newcommand{\calM}{{\mathcal M}}
\newcommand{\calO}{{\mathcal O}}
\newcommand{\calS}{{\mathcal S}}
\newcommand{\sbv}[2]{{\{{{#1},{#2}}\}}}
\newcommand{\courant}[2]{{[{{#1},{#2}}]_D}}
\newcommand{\bracket}[2]{\langle #1,\,#2\rangle}
\newcommand{\inner}[2]{{({{#1},{#2}})}}
\newcommand{\rd}{\mathrm{d}}
\def\xzero{X{}}
\newcommand{\ue}{{\underline{e}}{}}
\begin{document}

%%%%%%%%%%%%%%%%%%%%%%%%%%%%%%%%%%%%%%%%%%%%%%%%%%%%%%%%%%%%%%%%%%
%%%%%%%%%%%%%%%%%%%%%%%% Title %%%%%%%%%%%%%%%%%%%%%%%%%%%%%%%%%%%
%%%%%%%%%%%%%%%%%%%%%%%%%%%%%%%%%%%%%%%%%%%%%%%%%%%%%%%%%%%%%%%%%%

\baselineskip 0.7cm

\begin{titlepage}
%\today
\begin{flushright}
%MISC-2011-
\end{flushright}

\vskip 1.35cm
\begin{center}
{\Large \bf
Higher dimensional Lie algebroid sigma model with WZ term
}
\vskip 1.2cm
Noriaki Ikeda
%${}^b$
\footnote{E-mail:\
nikedaATse.ritsumei.ac.jp
%,
%ikeda@yukawa.kyoto-u.ac.jp
}
\vskip 0.4cm

{\it
%${}^b$
Department of Mathematical Sciences,
Ritsumeikan University \\
Kusatsu, Shiga 525-8577, Japan \\
}
\vskip 0.4cm

\today

\vskip 1.5cm

\begin{abstract}
We generalize the $(n+1)$-dimensional twisted $R$-Poisson topological 
sigma model with flux on a target Poisson manifold to a Lie algebroid.
%The target manifold of the $R$-Poisson topological sigma model is a Poisson manifold.
% with a higher flux $R$ and higher closed form (a pre-multisymplecitc form) 
%$H$ induced from the WZ term.
Analyzing consistency of constraints in the Hamiltonian formalism and the gauge symmetry in the Lagrangian formalism, geometric conditions of the target space to make the topological sigma model consistent are identified.
The geometric condition is an universal compatibility condition of a Lie algebroid with the multi-symplectic structure. This condition is a generalization of the momentum map theory of a Lie group and is regarded as a generalization of the momentum section condition of the Lie algebroid.
\end{abstract}
\end{center}
\end{titlepage}

\tableofcontents

\setcounter{page}{2}

%%%%%%%%%%%%%%%%%%%%%%%%%%%%%%%%%%%%%%%%%%%%%%%%%%%%%%%%%%%%%%%%%%
%%%%%%%%%%%%%%%%%%%%%%%% Article %%%%%%%%%%%%%%%%%%%%%%%%%%%%%%%%%
%%%%%%%%%%%%%%%%%%%%%%%%%%%%%%%%%%%%%%%%%%%%%%%%%%%%%%%%%%%%%%%%%%

\rm

%%%%%%%%%%%%%%%%%%%%%%%%%%%%%%%%%%%%%%%%%%%%%%%%%%%%%%%%%%%%%%%%%%%%%
%%%%%%%%%%%%%%%%%%%%%%%%%%%%%%   SEC      %%%%%%%%%%%%%%%%%%%%%%%%%%
%%%%%%%%%%%%%%%%%%%%%%%%%%%%%%%%%%%%%%%%%%%%%%%%%%%%%%%%%%%%%%%%%%%%%
\section{Introduction}
\noindent
Algebroid structures appear as background mathematical structures in physics, such as T-duality in string theory \cite{Grana:2008yw, 
Cavalcanti:2011wu, Blumenhagen:2012pc, Asakawa:2014kua, Severa:2015hta, Heller:2016abk},
gauged nonlinear sigma models
\cite{Chatzistavrakidis:2015lga, Chatzistavrakidis:2016jfz, Chatzistavrakidis:2016jci, Chatzistavrakidis:2017tpk, Bouwknegt:2017xfi, Bugden:2018pzv, Wright:2019pru, Ikeda:2019pef},
topological sigma models \cite{Ikeda:2012pv, Chatzistavrakidis:2019rpp, Grewcoe:2020uih, Marotta:2021sia}, 
double field theory
\cite{Siegel:1993th, Siegel:1993xq, Hull:2009mi, Hull:2009zb, Chatzistavrakidis:2018ztm, Chatzistavrakidis:2019huz, Grewcoe:2020hyo}, etc.
It is important to analyze geometric structures in duality physics.
A Lie algebroid \cite{Mackenzie}, which is a generalization of a Lie algebra, 
is the most fundamental algebroid structure. In this paper, we propose a new topological sigma model with a Lie algebroid structure.
%a Courant algebroid, a higher Lie $n$-algebroid, a metric (Vaisman) algbroid, etc.

The Poisson structure is not only a fundamental structure of the classical mechanics but also a generalization of a Lie algebra, which mainly appears as symmetries.
It is defined by a bivector field $\pi \in \Gamma(\wedge^2 TM)$ satisfying
%\begin{eqnarray} && 
$[\pi, \pi]_S = 0,$
%\end{eqnarray}
where $[-,-]_S$ is the Schouten bracket defined on the space of multivector fields $\Gamma(\wedge^{\bullet} TM)$.
A sigma model with the Poisson structure, 
the Poisson sigma model \cite{Ikeda:1993fh, Schaller:1994es},
describes topological aspects of the NS-flux and
has many applications such as the derivation of Kontsevich formula 
of the deformation quantization \cite{Cattaneo:1999fm}.
The Poisson sigma model is generalized to the twisted Poisson sigma model by introducing the WZ term as a consistent constrained mechanical system. Consistency requires the deformation of the Poisson structure to the twisted Poisson structure \cite{Klimcik:2001vg, Park:2000au, Severa:2001qm}.
The twisted Poisson structure is defined by equations,
\begin{eqnarray}
&& \frac{1}{2}[\pi, \pi]_S = \bracket{\otimes^{3} \pi}{H},
\\
&& \rd H =0,
\end{eqnarray}
where $H$ is a closed $3$-form.
For a manifold $M$ with a Poisson or a twisted Poisson structure, the cotangent bundle $T^*M$ has a Lie algebroid structure.
Thus, it is interesting to generalize a Poisson or a twisted Poisson structure to a general Lie algebroid case.

Recently, Chatzistavrakidis has proposed a higher generalization of the twisted Poisson structure and the twisted Poisson sigma model by considering a higher dimensional topological sigma model \cite{Chatzistavrakidis:2021nom}.
It is a topological sigma model with WZ term on a $(n+1)$-dimensional worldvolume.  The twisted $R$-Poisson structure is defined by the following condition,
\begin{eqnarray}
&& [\pi, \pi]_S = 0,
\label{RPoisson1}
\\
&& [\pi, J]_S = \frac{}{} 
%(-1)^{n+2} 
\bracket{\otimes^{n+2} \pi}{H},
\label{RPoisson2}
\\
&& \rd H =0,
\label{RPoisson3}
\end{eqnarray}
where $\pi$ is the Poisson bivector field, 
$H \in \Omega^{n+2}(M)$ is a closed $(n+2)$-form and 
$J \in \Gamma(\wedge^{n+1}(M))$ is an $(n+1)$-multivector field on $M$.
\footnote{In this paper, we denote a multivector field by $J$ though
it is denoted by $R$ in the paper \cite{Chatzistavrakidis:2021nom}. 
$R$ is used for a curvature.}

In this paper, we consider a new topological sigma model by generalizing 
the Poisson structure to a Lie algebroid in the twisted $R$-Poisson sigma 
model.
The key equation is
%\eqref{momsec1}
\begin{eqnarray}
&& {}^E \rd J = - \bracket{\otimes^{n+2} \rho}{H},
\label{momsec21}
\end{eqnarray}
where ${}^E \rd $ is the Lie algebroid differential, 
$J$ is an $E$-$(n+1)$-form, 
$\rho$ is the so called anchor map of a Lie algebroid and $H$ is a closed $(n+2)$-form.
We analyze mathematical structures of Equation \eqref{momsec21} in details
in Section \ref{LAmultisym}. We show that the total structure is regarded as 
a higher Dirac structure of a Lie $(n+1)$-algebroid.

Another purpose is to generalize the so called AKSZ sigma models
\cite{Alexandrov:1995kv, Cattaneo:2001ys, Ikeda:2001fq, Roytenberg:2006qz}
adding the WZ term.
The AKSZ construction of topological sigma models is a clear geometric construction method of the rather complicated BFV formalism 
\cite{Batalin:1977pb, Batalin:1983pz} and the BV formalism 
\cite{Batalin:1981jr, Batalin:1984jr}
from a classical action based on graded symplectic geometry. The BV bracket and the BV action are directly constructed by pullbacks of the target space graded symplectic structure.
For instance, refer to 
%a textbook of the BFV and BV formalisms \cite{Henneaux:1992ig}, and 
a review of AKSZ sigma models \cite{Ikeda:2012pv}. 
However, the AKSZ construction does not work if we twist the classical action adding the WZ term.
In two dimensional case, the BV and BFV formalisms of the twisted Poisson sigma model have been constructed in the paper \cite{Ikeda:2019czt}, and it was discussed that the correct BV action of the twisted PSM was not obtained by the genuine AKSZ procedure.
In order to consider generalizations to higher dimensions,
first we need to clarify background geometric structures of 
higher dimensional twisted topological sigma models with the WZ term.

This paper is organized as follows.
In Section 2, we introduce a topological sigma model 
with a Lie algebroid structure and WZ term.
In Section 3, we prepare geometric structures which appear in our model 
such as a Lie algebroid, a pre-multisymplectic structure and their compatibility condition. We also explain some related examples.
In Section 4, we analyze the Hamiltonian formalism and show that the theory is consistent if and only if the geometric compatibility condition holds.
In Section 5, the Hamiltonian formalism is rewritten to the target space covariant expression. All equations are described by geometric quantities of the target manifold.
In Section 6, we consider the Lagrangian formalism and obtain consistent gauge transformations under the same geometric compatibility condition.
In Section 7, we rewrite gauge transformations to the manifestly covariant formulation.
Section 8 is devoted to discussion and outlook.
In Appendix \ref{geometryofLA}, some formulas are summarized.

%%%%%%%%%%%%%%%%%%%%%%%%%%%%%%%%%%%%%%%%%%%%%%%%%%%%%%%%%%%%%%%%%%%%
%%%%%%%%%%%%%%%%%%%%%%%%%%%%%%%%%%%%%%%%%%%%%%%%%%%%%%%%%%%%%%%%%%%%
%%%%%%%%%%%%%%%%%%%%%%%%%%%%%%%%%%%%%%%%%%%%%%%%%%%%%%%%%%%%%%%%%%%%
\section{Lie algebroid topological sigma model with flux and WZ term}
\label{LASMFWZ}
Let $N$ be an $n+2$ dimensional manifold with $n+1$ dimensional boundary, 
$\Sigma = \partial N$. 
Consider a $d$-dimensional target space $M$ and a vector bundle $E$ over $M$.
Suppose $E$ has a Lie algebroid structure.
A Lie algebroid has two operations, a Lie bracket $[-,-]$ on $\Gamma(E)$
and the bundle map $\rho:E \rightarrow TM$ called the anchor map.
A Lie algebroid is reviewed in Section \ref{LAmultisym}.
We introduce the pairing of $TM$ and $T^*M$, $\bracket{-}{-}$,
and the pairing of $E$ and $E^*$, $\inner{-}{-}$.

We consider a smooth map from $N$ to $M$, $X: N \rightarrow M$.
$A \in \Gamma(T^*\Sigma, X^* E)$ is a $1$-form taking a value on the pullback of $E$, $X^* E$.
$Y \in \Gamma(\wedge^{n-1} T^*\Sigma, X^* E^*)$ is an $(n-1)$-form taking a value on $X^* E^*$.
$Z \in \Gamma(\wedge^{n} T^*\Sigma, X^* T^*M)$ is an $n$-form taking a value on $X^* T^*M$.
We consider the following sigma model action functional,
\begin{eqnarray}
S &=& \int_{\Sigma} 
\left[\bracket{Z}{\rd X} + \inner{Y}{\rd A}
- \bracket{Z}{X^* \rho (A)}
+ \frac{1}{2} \inner{Y}{X^* [A, A]}
+ X^* J(A, \ldots, A)
\right]
%\right.
\nonumber \\ &&
%\left.
+ \int_{N} X^* H.
\label{classicalactionofHLASM0}
\end{eqnarray}
Here $\rd$ is the de Rham differential on $\Sigma$.
For pairings of pullbacks by $X$, the same notation are used, i.e.,
$\bracket{-}{-}$ is the pairing of a pullback of $TM$ and $T^*M$,
and
$\inner{-}{-}$ is the pairing of a pullback of $E$ and $E^*$.
%$\rho$ is a bundle map from $E$ to $TM$, or a section of $E^* \otimes TM$.
$J \in \Gamma(\wedge^{n+1}E^*)$ is an $E$-$(n+1)$-form on $E$
and $H \in \Omega^{n+2}(M)$ is an $(n+2)$-form on $M$.

Taking local coordinates on $M$ and $E$, we have four kinds of fields $X^i$, $Z_i$, $A^a$ and $Y_a$, 
where $i$ is the index of $M$ and $a$ is the index of the fiber of $E$.
The action is
\begin{align}
S &= \int_{\Sigma} 
\left[Z_i \wedge \rd X^i + Y_a \wedge \rd A^a 
- \rho^i_a(\xzero) Z_i \wedge A^a
+ \frac{1}{2} C_{ab}^c(\xzero) Y_c \wedge A^a \wedge A^b
\right.
\nonumber \\ &
\left.
+ \frac{1}{(n+1)!} J_{a_1\ldots a_{n+1}}(\xzero) A^{a_1} \wedge \ldots \wedge 
A^{a_{n+1}}
\right]
\nonumber \\ &
+ \int_{N} \frac{1}{(n+2)!} H_{i_1\ldots i_{n+2}}(\xzero) 
\rd X^{i_1} \wedge \ldots \wedge \rd X^{i_{n+2}}.
\label{classicalactionofHLASM}
\end{align}
$\rho^i_a$ is local coordinate expression of the anchor map $\rho$, 
$C_{ab}^c$ are the structure functions of the Lie bracket, 
$J_{a_1\ldots a_{n+1}}$ and $H_{i_1\ldots i_{n+2}}$ 
are $J$ and $H$, which are completely antisymmetric tensors.
We call this model the twisted Lie algebroid sigma model with flux,
or the Lie algebroid sigma model with the WZ term.
% of corresponding geometrical quantities on $M$ and $E$.
%Here $Q^{ij}_k(\xzero) = - \partial_k \pi^{ij}(X)$.

The equations of motion are computed as
\begin{align}
F^i_X &:= \rd X^i - \rho^i_a(X) A^a = 0  \, ,
 \label{eom1}\\
F_{A}^a &:= \rd A^a + \tfrac{1}{2} C^a_{bc}(X) A^b \wedge A^c = 0  \, ,
\label{eom2}
\\
F_{Ya} &:= \rd Y_a + (-1)^{n} \rho^i_a Z_i
+ (-1)^{n-1} C_{ab}^c Y_c \wedge A^b + \frac{1}{n!} J_{a b_2\ldots b_{n+1}}(\xzero) A^{b_2} \wedge \ldots \wedge 
A^{b_{n+1}}  = 0  \, ,
 \label{eom3}\\
F_{Zi} &:= (-1)^{n} \rd Z_i - \partial_i \rho^j_a Z_j \wedge A^a 
+ \tfrac{1}{2} \partial_i C^a_{bc}
Y_a \wedge A^b \wedge A^c
+ \frac{1}{(n+1)!} \partial_i J_{a_1\ldots a_{n+1}}(\xzero) A^{a_1} \wedge \ldots \wedge A^{a_{n+1}}
\nonumber \\
&+ \frac{1}{(n+1)!} H_{ij_1\ldots j_{n+1}}
\rd X^{j_1} \wedge \ldots \wedge \rd X^{j_{n+1}} = 0  \, .
 \label{eom4}
\end{align}

%%%%%%%%%%%%%%%%%%%%%%%%%%%%%%%%%%%%%%%%%%%%%%%%%%%%%%%%%%%%%%%%%%%%%
%%%%%%%%%%%%%%%%%%%%%%%%%%%%%%   SEC      %%%%%%%%%%%%%%%%%%%%%%%%%%
%%%%%%%%%%%%%%%%%%%%%%%%%%%%%%%%%%%%%%%%%%%%%%%%%%%%%%%%%%%%%%%%%%%%%
\section{Lie algebroid and compatible $E$-flux on pre-multisymplectic manifold}
\label{LAmultisym}
\noindent
In this section, we explain the background geometry of the sigma model 
\eqref{classicalactionofHLASM0} introduced in Section \ref{LASMFWZ}.

%%%%%%%%%%%%%%%%%%%%%%%%%%%%%%%%%%%%%%%%%%%%%%%%%%%%%%%%%%%%%%%%%%%%%
\subsection{Lie algebroid}
%A \emph{Lie algebroid} consists of a vector bundle $E \to M$ together with a bundle morphism $\rho \colon E \to TM$ as well as a Lie algebra $(\Gamma(E), [ \cdot , \cdot ])$ satisfying the Leibniz rule $[s, f \tilde s]=f[s,\tilde s] + \rho(s)f \, \tilde s$ for all $s,\tilde s \in \Gamma(E)$ and $f \in C^\infty(M)$. 
Since we want to consider a generalization of the $R$-Poisson structure, 
we assume that the target vector bundle is a Lie algebroid.
\begin{definition}
Let $E$ be a vector bundle over a smooth manifold $M$.
A Lie algebroid $(E, \rho, [-,-])$ is a vector bundle $E$ with
a bundle map $\rho: E \rightarrow TM$ called the anchor map, 
and a Lie bracket
$[-,-]: \Gamma(E) \times \Gamma(E) \rightarrow \Gamma(E)$
satisfying the Leibniz rule,
\begin{eqnarray}
[e_1, fe_2] &=& f [e_1, e_2] + \rho(e_1) f \cdot e_2,
\end{eqnarray}
{where $e_i \in \Gamma(E)$ and $f \in C^{\infty}(M)$.}
\end{definition}
Local coordinate expressions of formulas in a Lie algebroid 
are listed in Appendix \ref{geometryofLA}.

\if0
Let $e_a$ and $e^a$ be dual basis of the fiber of $E$ and $E^*$.
If we take local coordinate expressions of the anchor map
$\rho := \rho^i_a(x) e^a \otimes \partial_i$, 
and the Lie bracket, $[e_a, e_b] := C_{ab}^c e_c$,
the identities of the Lie algebroid are
\beqa 
&& \rho_a^j \partial_j \rho_{b}^i - \rho_b^j \partial_j \rho_{a}^i = C_{ab}^c \rho_c^i,
\label{LAidentity1}
\\
&& C_{ad}^e C_{bc}^d + \rho_a^i \partial_i C_{bc}^e + \mbox{Cycl}(abc) = 0.
\label{LAidentity2}
\eeqa
\fi
A Lie algebroid is a generalization of a Lie algebra and the space of vector fields.
\begin{example}
Let a manifold $M$ be one point $M = \{pt \}$. 
Then a Lie algebroid is a Lie algebra $\mathfrak{g}$.
\end{example}
\begin{example}
If a vector bundle $E$ is a tangent bundle $TM$ and $\rho = \mathrm{id}$, 
then a bracket $[-,-]$ is a normal Lie bracket of vector fields
and $(TM, \mathrm{id}, [-,-])$ is a Lie algebroid.
\end{example}
\begin{example}\label{actionLA}
Let $\mathfrak{g}$ be a Lie algebra and assume an infinitesimal action of 
$\mathfrak{g}$ on a manifold $M$.
$\mathfrak{g}$ acts as a differential operator,
the infinitesimal action 
%$\mathfrak{g} \times M \rightarrow M$ it 
determines a map $\rho: M \times \mathfrak{g} \rightarrow TM$.
The consistency of a Lie bracket requires a Lie algebroid structure on 
$(E= M  \times \mathfrak{g}, \rho, [-,-])$.
This Lie algebroid is called an action Lie algebroid.
\end{example}

\begin{example}\label{Poisson}
An important nontrivial Lie algebroid is a Lie algebroid induced from a Poisson structure. A bivector field $\pi \in \Gamma(\wedge^2 TM)$ is called a Poisson structure if $[\pi, \pi]_S =0$, where $[-,-]_S$ is a Schouten bracket on $\Gamma(\wedge^{\bullet} TM)$.

Let $(M, \pi)$ be a Poisson manifold. Then, we can define a bundle map,
$\pi^{\sharp}: T^*M \rightarrow TM$ by $\pi^{\sharp}(\alpha)(\beta) = \pi(\alpha, \beta)$ for all $\beta \in \Omega^1(M)$.
A Lie bracket on $\Omega^1(M)$ is defined by the so called Koszul bracket,
\begin{eqnarray}
[\alpha, \beta]_{\pi} = L_{\pi^{\sharp} (\alpha)}\beta - L_{\pi^{\sharp} (\beta)} \alpha - \rd(\pi(\alpha, \beta)),
\end{eqnarray}
where $\alpha, \beta \in \Omega^1(M)$.
Then, $(T^*M, -\pi^{\sharp}, [-, -]_{\pi})$ is a Lie algebroid.
\end{example}

\begin{example}\label{tPoisson}
More generally, Let $(M, \pi, H)$ be a twisted Poisson manifold.
i.e., suppose that a bivector field $\pi \in \Gamma(\wedge^2 TM)$ and 
$H \in \Omega^3(M)$ satisfy  the following equations:
\begin{eqnarray}
&& \frac{1}{2}[\pi, \pi]_S = \bracket{\otimes^{3} \pi}{H},
\label{tPoisson1}
\\
&& \rd H =0,
\end{eqnarray}

If we define a bundle map,$\pi^{\sharp}: T^*M \rightarrow TM$ 
and a Lie bracket on $\Omega^1(M)$,
\begin{eqnarray}
[\alpha, \beta]_{\pi,H} = L_{\pi^{\sharp} (\alpha)}\beta - L_{\pi^{\sharp} (\beta)} \alpha - \rd(\pi(\alpha, \beta))
+ \iota_{\alpha} \iota_{\beta} H,
\end{eqnarray}
for $\alpha, \beta \in \Omega^1(M)$.
Then, $(T^*M, -\pi^{\sharp}, [-, -]_{\pi, H})$ is a Lie algebroid.
\end{example}

One can refer to many other examples, for instance, in \cite{Mackenzie}.

%%%%%%%%%%%%%%%%%%%%%%%%%%%%%%%%%%%%%%%%%%%%%%%%%%%%%%%%%%%%%%%%%%%%%
%%%%%%%%%%%%%%%%%%%%%%%%%%%%%%   SEC      %%%%%%%%%%%%%%%%%%%%%%%%%%
%%%%%%%%%%%%%%%%%%%%%%%%%%%%%%%%%%%%%%%%%%%%%%%%%%%%%%%%%%%%%%%%%%%%%
\subsection{Lie algebroid differential}
Consider the spaces of exterior products of sections of $E^*$ called 
the space of $E$-differential forms, $\Gamma(\wedge^{\bullet} E^*)$.
We define a Lie algebroid differential ${}^E \rd: \Gamma(\wedge^m E^*)
\rightarrow \Gamma(\wedge^{m+1} E^*)$
such that $({}^E \rd)^2=0$. 
%The complex $(\Gamma(\wedge^{\bullet} E^*), {}^E \rd)$ 
\begin{definition}
A Lie algebroid differential ${}^E \rd: \Gamma(\wedge^m E^*)
\rightarrow \Gamma(\wedge^{m+1} E^*)$ is defined by
\begin{eqnarray}
{}^E \rd \alpha(e_1, \ldots, e_{m+1}) 
&=& \sum_{i=1}^{m+1} (-1)^{i-1} \rho(e_i) \alpha(e_1, \ldots, 
\check{e_i}, \ldots, e_{m+1})
\nonumber \\ && 
+ \sum_{1 \leq i < j \leq m+1} (-1)^{i+j} \alpha([e_i, e_j], e_1, \ldots, \check{e_i}, \ldots, \check{e_j}, \ldots, e_{m+1}),
\label{LAdifferential}
\end{eqnarray}
where $\alpha \in \Gamma(\wedge^m E^*)$ and $e_i \in \Gamma(E)$.
\footnote{In Equation \eqref{LAdifferential}, indices $i, j$ are not indices of local coordinates on $M$, but counting of elements of $\Gamma(E)$.}
\end{definition}
One can easily check $({}^E \rd)^2=0$  using identities of the Lie algebroid.

Lie algebroids are described by means of $\mathbb{Z}$-graded geometry \cite{Vaintrob}. A graded manifold $E[1]$ for a vector bundle $E$ are
shifted vector bundle spanned by local coordinates $x^i, \ (i=1, \ldots, \mathrm{dim} M)$ on the base manifold $M$ of degree zero, and $q^a, \ (a=1, \ldots, \mathrm{rank} E)$ on the fiber of degree one, respectively. Degree one coordinate $q^a$ has the property, $q^a q^b = - q^b q^a$. $E$-differential forms which are 
sections of $\wedge^{\bullet} E^*$ are identified with functions on the graded manifold $E[1]$, i.e., $C^{\infty}(E[1]) \simeq \Gamma(\wedge^{\bullet} E^*)$,
where the degree one odd coordinate $q^a$ is identified by a basis $e^a$ of sections of $E^*$.
A product for homogeneous elements $f, g \in C^{\infty}(E[1])$ has the
property, $fg = (-1)^{|f||g|} gf$, where $|f|$ is degree of $f$.
The differential operator of degree $-1$, $\frac{\partial}{\partial q^a}$, is the derivation satisfying $\frac{\partial}{\partial q^a} q^b = \delta^b_a$, which is a linear operator on a space of functions satisfying the Leibniz rule.  

We define a degree plus one vector field $Q$ on $E[1]$:
 \beq \label{Q}
 Q = \rho_a^i (x) q^a \frac{\partial}{\partial x^i} - \frac{1}{2} C^c_{ab}(x) q^a q^b  \frac{\partial}{\partial q^c} \, ,
 \eeq
%where $\rho_a^i (x)$ and $C^c_{ab}(x)$ are local functions of $x$. Since $q^a$ is of degree $1$, $C^c_{ab} = - C^c_{ba}$.
%Let $e_a$ be a local basis in $E$ dual to the basis corresponding to the coordinates $q^a$. Two functions $\rho^i_a(x)$ and $C^a_{bc}(x)$ in $Q$ define a bundle map $\rho:E \rightarrow TM$ and a bilinear bracket on $\Gamma(E)$ by means of $\rho(e_a) :=  \rho_a^i(x) \partial_i$  and  $[e_a,e_b]:= C^c_{ab}(x) e_c$. 
Then, the odd vector field $Q$ satisfies 
 \beq  Q^2 = 0 \, .\label{Q2}
 \eeq
if and only if $\rho, C$ are the anchor map and the structure function of a Lie algebroid on $E$.
Identifying functions on $C^{\infty}(E[1]) \simeq \Gamma(\wedge^\bullet E^*)$, 
$Q$ is the Lie algebroid differential ${}^E \rd$.

We explain the precise correspondence of $Q$ with ${}^E\rd$.
%The homological vector field $Q$ is the differential operator on the space of 
%smooth functions on the graded manifold, $C^{\infty}(\calM)$.
For $e^a$, the basis of $E^*$,
the map 
$j: \Gamma(\wedge^{\bullet}E^*) \rightarrow C^{\infty}(E[1])$
is given by the map of basis,
$j:(x^i, e^a) \mapsto (x^i, q^a)$.
The differential ${}^E\rd$ on $\Gamma(\wedge^{\bullet}E^*)$ is defined by the pullback, ${}^E\rd=j^*Q$.
\if0
In the remaining of the paper, we identify 
 $C^{\infty}(E[1]) \simeq \Gamma(\wedge^\bullet E^*)$, and $Q$ to ${}^E \rd$
via the isomorphism $j$.
\fi

\if0
Let me explain the precise correspondence of $Q$ with ${}^E\rd$.
The homological vector field $Q$ is the differential operator on the space of 
smooth functions on the graded manifold, $C^{\infty}(\calM)$.
For $e^a$ and $\partial_i$, the basis of $E^*$ and $TM$,
the map 
$j: \wedge^{\bullet}E^* \oplus \calS^{\bullet} T^*M
\rightarrow \calM $
is given by 
$j:(x^i, e^a, \partial_i) \mapsto (x^i, \eta^a, p_i)$.
The differential ${}^E\rd$ on $\Gamma(\wedge^{\bullet}E^* \oplus \calS^{\bullet} T^*M)$ is defined by a pullback of $Q$, ${}^E\rd=j^*Q$.
\fi

%%%%%%%%%%%%%%%%%%%%%%%%%%%%%%%%%%%%%%%%%%%%%%%%%%%%%%%%%%%%%%%%%%%%%
%%%%%%%%%%%%%%%%%%%%%%%%%%%%%%   SEC      %%%%%%%%%%%%%%%%%%%%%%%%%%
%%%%%%%%%%%%%%%%%%%%%%%%%%%%%%%%%%%%%%%%%%%%%%%%%%%%%%%%%%%%%%%%%%%%%
\subsection{Compatible condition of $E$-differential form with 
pre-multisymplectic form}
We introduce another geometric notion which appears in the topological 
sigma model \eqref{classicalactionofHLASM0}. 
It is a condition on a pre-multisymplectic structure analogous to
the condition of the momentum map in the symplectic manifold.
\begin{definition}
A pre-$(n+1)$-plectic form $H$ is a closed $(n+2)$-form on a 
smooth manifold $M$, i.e., $\rd H=0$.
A manifold $M$ with a pre-$(n+1)$-plectic form $H$ is called a pre-$(n+1)$-plectic manifold.
\end{definition}
A pre-$(n+1)$-plectic manifold is also called a pre-multisymplectic manifold for $n \geq 1$.
A pre-$(n+1)$-plectic structure is called an $(n+1)$-plectic structure if 
$H$ is nondegenerate, i.e., if $\iota_{v} H =0$ for a vector field $v \in \mathfrak{X}(M)$ is equivalent to $v=0$.
A $1$-plectic manifold ($n=0$) is nothing but a symplectic manifold.

We introduce an ordinary connection $\nabla$ on the vector bundle $E$. i.e.,
a covariant derivative $\nabla: \Gamma(E) \rightarrow \Gamma(E \otimes T^*M)$, satisfying $\nabla(fe) =f \nabla e +  \nabla f \otimes e$ for a section $e \in \Gamma(E)$ and a function $f \in C^{\infty}(M)$. 
A dual connection on $E^*$ is defined by 
\begin{eqnarray}
\rd \inner{\mu}{e} = \inner{\nabla \mu}{e} + \inner{\mu}{\nabla e},
\end{eqnarray}
for all sections $\mu \in \Gamma(E^*)$ and $e \in \Gamma(E)$.
The connection is extended to the space of differential forms
and the dual connection extends to a degree 1 operator on the space of differential forms $\Omega^k(M, E)$ and $\Omega^k(M, E^*)$ .

An \textit{$E$-connection} ${}^E \nabla: \Gamma(TM) \rightarrow \Gamma(TM \otimes E^*) $ on the space of sections $\Gamma(TM)$ is defined by
\begin{eqnarray}
{}^E \nabla_{e} v &:=& \calL_{\rho(e)} v + \rho(\nabla_v e)
= [\rho(e), v] + \rho(\nabla_v e),
%{}^E \nabla_{e} e' &:=& \nabla_{\rho(e')} e + [e, e'],
\end{eqnarray}
%which can be regarded as a covariantization of the $E$-differential 
%${}^E \rd$,
where 
$e \in \Gamma(E)$ and $v \in \Gamma(TM)$.
%$e, e' \in \Gamma(E)$.
%The $E$-connection is also extended to the space of $E$-differential forms
%as a degree 1 differential operator.

For an $(n+2)$-form $H$ and the anchor map $\rho$,
$\bracket{\otimes^{n+2} \rho}{H}$ is defined by
\begin{eqnarray}
&& 
\bracket{\otimes^{n+2} \rho}{H}(e_1, \ldots, e_{n+2})
= (\iota_{\rho})^{(n+2)} H(e_1, \ldots, e_{n+2})
= H (\rho(e_1), \ldots, \rho(e_{n+2})),
\end{eqnarray}
for $e_i \in \Gamma(E)$.

We introduce a new notion.
\begin{definition}\label{bracomp}
Let $(M, H)$ be a pre-$(n+1)$-plectic manifold
and $(E, \rho, [-,-])$ be a Lie algebroid over $M$.
Then, an $E$-$(n+1)$-form $J \in \Gamma(\wedge^{n+1} E^*)$ is called 
\textit{bracket-compatible} if $J$ satisfies
\begin{eqnarray}
&& {}^E \rd J
= - \bracket{\otimes^{n+2} \rho}{H}
= %\iota_{\rho}\ldots 
- (\iota_{\rho})^{n+2} H.
\label{momsec1}
\end{eqnarray}
Then, a flux $J$ is also called \textit{compatible} with a pre-multisymplectic form $H$.
\end{definition}
The important note is the left hand side 
in \eqref{momsec1} is the $E$-derivative 
${}^E \rd$, not the $E$-covariant derivative ${}^E \nabla$.
%Here $\bracket{-}{-}$ is a pairng of $TM$ and $T^*M$.

The condition \eqref{momsec1} appears in many situations
as we list up some examples below.
This condition is regarded as one universal generalization
of compatibility conditions of a Lie algebroid structure
with a pre-multisymplectic form.
%, i.e., a closed differential form on $M$.

Some known geometric structures are regarded as special cases of 
Equation \eqref{momsec1}.
\begin{example}[Twisted Poisson structure]
Let $(\pi, H)$ be a twisted Poisson structure on $M$.
In this case, the cotangent bundle $T^*M$ has a Lie algebroid structure
as explained in Example \ref{tPoisson}.
Using the Lie algebroid differential ${}^E \rd$ induced from 
this Lie algebroid, Equation \eqref{tPoisson1} is rewritten as
\begin{eqnarray}
&& {}^E \rd \pi = - \bracket{\otimes^{3} \pi}{H}.
\end{eqnarray}
$J= \pi$ is bracket-compatible on a pre-$2$-plectic manifold
with a pre-$2$-plectic form $H$.
\end{example}

\begin{example}[twisted $R$-Poisson structure]
Let $M$ be a twisted $R$-Poisson manifold. \cite{Chatzistavrakidis:2021nom}
$\pi \in \Gamma(\wedge^2 TM)$ is a Poisson bivector field,
 $H$ is a closed $(n+2)$-form, and $J \in \Gamma(\wedge^{n+1}TM)$ is an $(n+1)$-multivector field.
As explained in Example \ref{Poisson},
the Poisson bivector field $\pi$ induces a Lie algebroid structure
on $T^*M$.
Under this Lie algebroid structure, the only nontrivial condition of $R$-Poisson structure \eqref{RPoisson2} is written as
\begin{eqnarray}
&& {}^E \rd J = \bracket{\otimes^{n+2} \pi}{H}.
\end{eqnarray}
$- J$ is bracket-compatible for the pre-$(n+2)$-plectic form $H$.
\end{example}

\begin{example}[Momentum section]
The terminology '\textit{bracket-compatible}' comes from the momentum section theory with a Lie algebroid action on a symplectic manifold, which is a generalization of the moment map theory on a symplectic manifold with a Lie group (Lie algebra) action. \cite{Blohmann:2018} See also \cite{Kotov:2016lpx, Ikeda:2019pef, Ikeda:2021fjk}.
%The structure has been analyzed in the sigma model context in \cite{Ikeda:2019pef}.

Suppose that a base manifold $M$ is a pre-symplectic manifold, i.e., $M$ has a a closed $2$-form $\omega = H \in \Omega^2(M)$, which is not necessarily nondegenerate. Moreover, suppose a Lie algebroid $(E, \rho, [-,-])$ over $M$.

\if0
In order to define a momentum section, we consider
an $E^*$-valued 1-form $\gamma \in \Omega^1(M, E^*)$ defined by
\begin{eqnarray}
\bracket{\gamma(v)}{e} = - \omega(v, \rho(e))
= - \iota_{\rho(e)} \omega(v),
%\omega_{MS}(v, \rho(a)).
\label{HH0}
\end{eqnarray}
where $e \in \Gamma(E)$ and $v \in \mathfrak{X}(M)$ is a vector field.
Here $\bracket{-}{-}$ is a natural pairing of $E$ and $E^*$.
%Here 
%$B = - \omega_{MS}$, thus, in local coordinates,
%$\gamma_{ia} = - B_{ij} \rho^j_a$
\fi
\begin{definition} 
A section $\mu \in \Gamma(E^*)$ of $E^*$ is called a momentum section 
if $\mu$ satisfies the following two conditions.\footnote{
The connection is denoted by $D$ in the papers \cite{Blohmann:2018}
and \cite{Ikeda:2019pef}.}
\medskip\\
\noindent
(M1)
A section $\mu \in \Gamma(E^*)$ is a \emph{momentum section} if
\begin{eqnarray}
\nabla \mu = - \iota_{\rho} \omega.
% \gamma,
\label{HH2}
\end{eqnarray}
%
%\medskip\\
%\noindent
(M2)
A momentum section $\mu$ is \emph{bracket-compatible} if
\begin{eqnarray}
{}^E \rd \mu 
%(e_1, e_2) = - \bracket{\gamma(\rho(e_1))}{e_2},
= - \bracket{\rho^{\otimes 2}}{\omega}
= - \iota_{\rho}^{2}{\omega}.
%(e_1, e_2),
\label{HH3}
\end{eqnarray}
\end{definition}
For an action Lie algebroid $E=M \times \mathfrak{g}$,
a momentum section reduces a momentum map.
Since we can take the zero connection $\nabla=d$ for the trivial bundle,
the condition (M1) is $\rd \mu = - \iota_{\rho} \omega$.
The condition (M2) reduces to the equivariant condition,
\begin{eqnarray}
\mathrm{ad}^*_{e_1} \mu(e_2) = \mu([e_1, e_2]).
\label{momentmap3}
\end{eqnarray}
for $e_1, e_2 \in \mathfrak{g}$ using \eqref{HH2}.

If we take $n=0$ in Definition \ref{bracomp} and $J= \mu$, Equation \eqref{momsec1} coincides with the condition (M2). Therefore Equation \eqref{momsec1} is a generalization of the bracket-compatible condition of the momentum section to a pre-multisymplectic manifold.

We make several comments about relations with our theory to 
the above definitions of momentum sections.
The condition corresponding to (M1), Equation \eqref{HH2}, does not appear in our model.
It is because our model is purely a topological sigma model. 
Refer to \cite{Ikeda:2019pef} about relations of the conditions (M1) and (M2) with the Hamiltonian mechanics.
Since the Hamiltonian is zero, $\calH=0$,
we obtain only the consistency conditions of constraints,
which is identified to the condition (M2).
The condition (M1) is related to consistency with the Hamiltonian and constraints as discussed in \cite{Ikeda:2019pef}. %Since the action \eqref{classicalactionofHLASM0} is topological, i.e., the Hamiltonian $\calH=0$, the condition (M1) does not appear in our model. 
If we consider non topological gauged nonlinear sigma models, the condition (M1) is needed as the consistency condition of gauge invariance. 
%We need another consistency condition of the Poisson brackets of the Hamiltonian and constraints if $\calH \neq 0$. 

The following one more condition (M0) is imposed in the paper \cite{Blohmann:2018}.
\medskip\\
\noindent
(M0)
$E$ is \emph{presymplectically anchored with respect to $\nabla$} if
\begin{eqnarray}
\nabla^2 \mu = 0,
\label{HH1}
\end{eqnarray}
The condition (M0) is regarded as a flatness condition of the connection 
$\nabla$ on $\mu$.
We do not require the condition (M0) for $J$ in our paper.
\end{example}

\if0
\begin{example}[Homotopy moment map]
Let $M$ be an $(n+1)$-plectic manifold with an $(n+1)$-plectic form $H$.
i.e., $H$ is a nondegenerate closed form.
Let $\mathfrak{g}$ be a Lie algebra.
We suppose an infinitesimal action of the Lie algebra $\mathfrak{g}$ on $M$ as a vector field, i.e., a Lie algebra homomorphism map $\rho: \mathfrak{g} \rightarrow TM$.
Then, if we consider a trivial bundle $E = M \times \mathfrak{g}$ over $M$,
$E$ has a action Lie algebroid structure induced by $\rho$ and the Lie algebra bracket $[-,-]$ in Example \ref{actionLA}.

For a Lie algebra, the Lie algebroid differential ${}^E \rd$ reduces to 
the adjoint action $\mathrm{ad}^*$ plus the Chevalley-Eilenberg differential $\rd_{CE}$ on the Lie algebra complex $\wedge^{\bullet} \mathfrak{g}^*$.
For $\alpha \in \wedge^m \mathfrak{g}^*$ and $e_i \in \mathfrak{g}$,
it is given by
\begin{eqnarray}
\rd_{CE} \alpha(e_1, \ldots, e_{m+1}) 
&=& \sum_{1 \leq i < j \leq m+1} (-1)^{i+j} \alpha([e_i, e_j], e_1, \ldots, \check{e_i}, \ldots, \check{e_j}, \ldots, e_{m+1}).
\end{eqnarray}
A homotopy momentum map is one of generalizations of the momentum map to 
a multi-symplectic manifold. \cite{Fregier:2013dda}

Let $\mu_k \in \Omega^{k}(M, \wedge^{n+1-k} \mathfrak{g}^*)$ be a series of $k$-forms on $M$ taking a value on $\wedge^{n+1-k} \mathfrak{g}^*$,
where $k=0, \ldots, n$.
$\mu = \sum_{k=0}^{n} \mu_k$ is the formal sum of differential forms. 
$\mu$ is a homotopy momentum map if it satisfies
\begin{eqnarray}
&& (\rd+ \rd_{CE}) \mu 
= - \sum_{k=1}^{n+2} (\iota_{\rho})^{k} H.
\label{mommapdef}
\end{eqnarray}
Here, $\rd$ is the de Rham differential and
\begin{eqnarray}
&& (\iota_{\rho})^{k} H(e_1, \ldots, e_k)(v_{k+1}, \ldots, v_{n+2}) 
:= H (\rho(e_1), \ldots, \rho(e_k), v_{k+1}, \ldots, v_{n+2}),
\end{eqnarray}
for $e_i \in \Gamma(E)$ and $v_i \in \mathfrak{X}(M)$.

The $0$-form part of the equation \eqref{mommapdef} is 
$\rd_{CE} \mu_0
= -\bracket{\otimes^{n+2} \rho}{H}
= %\iota_{\rho}\ldots 
- (\iota_{\rho})^{(n+2)} H.
$
It is equivalent to
${}^E \rd \mu_0 = - (\iota_{\rho})^{(n+2)} H$
if we use higher order equations in \eqref{mommapdef}.
%Equation \eqref{mommap1} is one of conditions of a homotopy momentum map.
It is the bracket-compatible condition \eqref{momsec1} where $\mu_0 = -J$.
\end{example}
\fi

%%%%%%%%%%%%%%%%%%%%%%%%%%%%%%%%%%%%%%%%%%%%%%%%%%%%%%%%%%%%%%%%%%%%
\subsection{Lie $(n+1)$-algebroid and higher Dirac structure}\label{HDS}
The compatibility condition \eqref{momsec1} is regarded as the higher Dirac structure of a Lie $m$-algebroid induced from the Lie algebroid $E$.
Let $m=n+1$ in this section.

A Lie $m$-algebroid is a higher analogue of a Lie algebroid.
A QP-manifold description based on graded geometry provides a 
clear method of the definition of a Lie $m$-algebroid.
A graded manifold $(M, \calO_{\calM})$ is a ringed space 
whose structure sheaf $\calO_{M}$ is a $\bZ$-graded commutative algebra over an ordinary smooth manifold $M$. The grading is compatible with the supermanifold grading, that is, a variable of even degree is commutative and a variable of odd degree is anticommutative. By definition, the structure sheaf of $M$ is locally isomorphic to $C^{\infty}(U)\otimes S^{\bullet}(V)$, where $U$ is a local chart on $M$, $V$ is a graded vector space, and $S^{\bullet}(V)$ is a free graded commutative ring on $V$. 
%For rigorous mathematical definitions, we refer to \cite{}

We consider nonnegatively graded manifolds called an N-manifold 
in this section. Let $m$ be a nonnegative integer.
\begin{definition} 
Consider an N-manifold $\calM$ equipped with a graded symplectic structure $\omega$ of degree $m$, and a vector field $Q$ of degree plus one such that $Q^2=0$.
$\calM$ is called a \textit{QP-manifold} of degree $m$ if $(\omega, Q)$ satisfies $\calL_Q \omega =0$.
\cite{Schwarz:1992nx}
\end{definition} 
We call this vector field $Q$ the homological vector field.
The graded Poisson bracket $\sbv{-}{-}$ is given from symplectic form $\omega$.
For any QP-manifold, there exists a homological function $\Theta \in C^{\infty}((\calM)$ associated to $Q$ such that 
$Q = \sbv{\Theta}{-}$
The homological condition, $Q^2=0$, implies that $\Theta$ is a solution of 
Equation 
\begin{align}
\sbv{\Theta}{\Theta} =0.
\label{CME}
\end{align}
One can refer to some references of mathematics of a QP-manifold \cite{Roytenberg:2006qz, Ikeda:2012pv}.

Let $E$ be a vector bundle over $M$.
In order to construct a Lie m-algebroid induced from a vector bundle $E$ as a QP-manifold, we consider the $(m-1)$-shifted vector bundle, $E[m-1]$ and the $1$-shifted dual bundle $E^*[1]$ with coordinates of the fiber shifted by $m-1$ and $1$.

A function on a graded manifold $E^*[m-1] \oplus E[1]$ is identified to 
a section on the vector bundle $E \oplus \wedge^{m-1} E^*$.
$C^{\infty}(E^*[m-1] \oplus E[1]) \simeq \Gamma(E \oplus \wedge^{m-1} E^*)$.
Let me explain the precise correspondence of two spaces.
We consider a QP-manifold $\calM = E^*[m-1] \oplus E[1] \oplus T^*[m]M$.
Let $\partial_i$, $e_a$ and $e^a$ be the basis of $TM$, $E$ and $E^*$,
respectively. The map 
\begin{align}
j: E \oplus \wedge^{m-1} E^* \oplus TM 
\rightarrow E^*[m-1] \oplus E[1] \oplus T^*[m]M,
\label{defj}
\end{align}
is given by 
$j:(x^i, e_a, e^a, \partial_i) \mapsto (x^i, p_a, q^a, \xi_i)$,
where $p_a$, $q^a$ and $\xi_i$ are coordinates of $E^*[m-1]$, $E[1]$ 
and $T^*[m]M$ of degree $(1, m-1, m)$, respectively.
The map $j$ induces the map,
$j: C^{\infty}(E^*[m-1] \oplus E[1]) \rightarrow \Gamma(E \oplus \wedge^{m-1} E^*)$.
A canonical graded symplectic form is defined by
\begin{align}
\omega &= \delta x^i \wedge \delta \xi_i + \delta q^a \wedge \delta p_a,
\label{gradedss}
\end{align}
where $\delta$ is the graded de Rham differential.

Now suppose a QP-manifold structure on $\calM = E^*[m-1] \oplus E[1] \oplus T^*[m]M$. i.e., take the canonical symplectic form \eqref{gradedss} 
and a homological function $\Theta$ satisfying Equation \eqref{CME}.

A Lie m-algebroid on $E \oplus \wedge^{m-1} E^*$ consists of 
an algebra on $\Gamma(E \oplus \wedge^{m-1} E^*)$ over $C^{\infty}(M)$
with three operations,
$(\inner{-}{-}, \rho, \courant{-}{-})$. 
$\inner{-}{-}: 
\Gamma(E \oplus \wedge^{m-1} E^*) \otimes \Gamma(E \oplus \wedge^{m-1} E^*)
\rightarrow \Gamma(\wedge^{m-2} E^*)$
is an inner product.
The bundle map 
$\rho: E \oplus \wedge^{m-1} E^* \rightarrow TM$ is called the anchor map, 
and the bilinear bracket
$\courant{-}{-}: \Gamma(E \oplus \wedge^{m-1} E^*)
\times \Gamma(E \oplus \wedge^{m-1} E^*) \rightarrow \Gamma(E \oplus \wedge^{m-1} E^*)$ is called the (higher) Dorfman bracket.
In the QP-manifold description, they are defined by
\begin{align}
\inner{e_1}{e_2} &= j^* \sbv{\ue_1}{\ue_2},
\\
\rho(e)f &= j^* \sbv{\sbv{\ue}{\Theta}}{f},
\\
\courant{e_1}{e_2} &= j^* \sbv{\sbv{\ue_1}{\Theta}}{\ue_2},
\end{align}
for $e, e_1, e_2 \in \Gamma(E \oplus \wedge^{m-1} E^*)$ 
and $f \in C^{\infty}(M)$.
Here $\ue = j_* e$ is the super function corresponding to $e \in \Gamma(E)$.
$j_*$ and $j^*$ are the pushforward and the pullback with respect to the map $j$ defined in Equation \eqref{defj}.
All the identities of three operations are induced from one equation \eqref{CME}.
We identify the graded manifold description and the normal vector bundle description and drop the operation $j$.

Now let the vector bundle $E$ be a Lie algebroid and $M$ be an $m$-plectic manifold. Then $E$ has the anchor map and the Lie bracket $\rho, [-,-]$ and $M$ has a closed $(m+1)$-form $H$.
If we define 
\begin{align}
\Theta &= \Theta_0 + \underline{ \iota_{\rho}^{m+1} H} 
\nonumber \\ 
&= \rho^i_a(x) \xi_i q^a + \frac{1}{2} C_{bc}^a(x) p_a q^b q^c 
+ \frac{1}{(m+1)!} \rho^{i_1}_{a_1} \ldots \rho^{i_{m+1}}_{a_{m+1}} 
H_{i_1 \ldots i_{m+1}}(x) q^{a_1}\ldots q^{a_{m+1}},
\label{homological}
\end{align}
$\Theta$ satisfies $\sbv{\Theta}{\Theta}=0$,
where 
\begin{align}
\Theta_0 &= \rho^i_a(x) \xi_i q^a + \frac{1}{2} C_{bc}^a(x) p_a q^b q^c.
\label{homological1}
\end{align}
Because $\Theta_0$ satisfies $\sbv{\Theta_0}{\Theta_0}=0$ from the identities of the Lie algebroid and $\sbv{\Theta_0}{\underline{\iota_{\rho}^{m+1} H} }=0$ is given from $\rd H = 0$.
$\sbv{\underline{\iota_{\rho}^{m+1} H} }{\underline{\iota_{\rho}^{m+1} H}} =0$ is trivially satisfied.
Thus Equation \eqref{CME} is satisfied, and it gives a QP-manifold. 
Therefore Equation \eqref{homological} defines a Lie m-algebroid.
Note that Equation \eqref{homological} does not include $J$.
Equation ${}^E \rd J = - \bracket{\otimes^{m+1} \rho}{H}$, is described as
a higher Dirac structure, which is explained next.

Three operations of this Lie m-algebroid are as follows.
Let $u + \alpha, v + \beta \in \Gamma(E \oplus \wedge^{m-1} E^*)$,
where $u, v \in \Gamma(E)$ and 
$\alpha, \beta \in \Gamma(\wedge^{m-1} E^*)$.
\begin{align}
\inner{u + \alpha}{v + \beta} &= \inner{u}{\beta} + \inner{\alpha}{v},
\label{operation1}
\\
\rho(e)f &= \rho(u) f,
\label{operation2}
\\
\courant{u + \alpha }{v + \beta} &= [u, v] + \calL_{u} \beta - \iota_v {}^E \rd \alpha + \iota_u \iota_v (\iota_{\rho}^{m+1} H),
%\bracket{\rho^{\otimes (m+1)}}{H},
\label{operation3}
\end{align}
where the bracket $\inner{-}{-}$ in the right hand side of \eqref{operation1}
is the pairing of $E$ and $E^*$.
$\rho$ in the right hand side of \eqref{operation2} is the anchor map of the Lie algebroid $E$.
The interior product of the right hand side of 
\eqref{operation3} is the contraction with respect to $E$ and $E^*$,
The Lie derivative is $\calL_u = \iota_u {}^E \rd + {}^E \rd \iota_u$.

The higher Dirac structure is the subbundle $L$ of the Lie m-algebroid 
satisfying the conditions,
$\inner{e_1}{e_2} = 0$ for all $e_1, e_2 \in \Gamma(L)$, and 
$\courant{e_1}{e_2}$ is an element of $\Gamma(L)$,
i.e., $\Gamma(L)$ is involutive with respect to the bracket 
$\courant{-}{-}$.

Now we take $J \in \Gamma(\wedge^m E^*)$ satisfying Equation \eqref{momsec1}, 
i.e., ${}^E \rd J = - \bracket{\otimes^{m+1} \rho}{H}$.
Then we consider the set
\begin{align}
\Gamma(L) &= \{u + \inner{J}{u} \in \Gamma(E \oplus \wedge^{m-1} E^*)
| u \in \Gamma(E) \}.
\end{align}
\begin{theorem}
If $J$ and $H$ satisfy ${}^E \rd J = - \bracket{\otimes^{m+1} \rho}{H}$, 
$L$ is a higher Dirac structure of a Lie $m$-algebroid.
\end{theorem}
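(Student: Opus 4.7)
The plan is to verify directly the two defining properties of a higher Dirac structure formulated in subsection \ref{HDS}: for any two sections of $L$, (i) their inner product vanishes (isotropy of $L$), and (ii) their Dorfman bracket again lies in $L$ (involutivity of $L$). In both cases I pick generic sections $u + \iota_u J,\, v + \iota_v J \in \Gamma(L)$ with $u, v \in \Gamma(E)$, using that $\inner{J}{u}$ in the definition of $L$ equals the contraction $\iota_u J \in \Gamma(\wedge^{m-1} E^*)$.

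The isotropy step is the short one. By formula \eqref{operation1} and the fact that the pairing on $E \oplus \wedge^{m-1} E^*$ is induced from the duality between $E$ and $\wedge^{m-1}E^*$, I compute
\begin{align*}
\inner{u + \iota_u J}{v + \iota_v J} \;=\; \iota_u(\iota_v J) + \iota_v(\iota_u J) \;=\; \iota_u \iota_v J - \iota_u \iota_v J \;=\; 0,
\end{align*}
where the anticommutativity $\iota_v \iota_u = -\iota_u \iota_v$ on $E$-forms is the only ingredient needed. No use of the compatibility condition on $J$ is required here.

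The involutivity step is where the compatibility hypothesis ${}^E\rd J = -\bracket{\otimes^{m+1}\rho}{H}$ enters. Starting from \eqref{operation3}, I plug in $\alpha = \iota_u J$ and $\beta = \iota_v J$ and apply two standard Cartan-calculus identities on the Lie algebroid $E$, namely the Cartan magic formula $\calL_u = \iota_u {}^E\rd + {}^E\rd \iota_u$ and the commutator identity $[\calL_u, \iota_v] = \iota_{[u,v]}$ (both of which follow from the identities of the Lie algebroid, or equivalently from $Q^2 = 0$ for the homological vector field of subsection 3.2). Then
\begin{align*}
\calL_u(\iota_v J) - \iota_v\,{}^E\rd(\iota_u J)
&= \iota_{[u,v]} J + \iota_v \calL_u J - \iota_v\,{}^E\rd\iota_u J \\
&= \iota_{[u,v]} J + \iota_v \iota_u\,{}^E\rd J.
\end{align*}
Substituting ${}^E\rd J = -\iota_\rho^{m+1} H$ cancels the remaining curvature-type term $\iota_u \iota_v(\iota_\rho^{m+1} H)$ appearing in \eqref{operation3} (up to sign conventions for the double contraction), producing
\begin{align*}
\courant{u+\iota_u J}{v+\iota_v J} \;=\; [u,v] + \iota_{[u,v]} J,
\end{align*}
which is again of the form $w + \iota_w J$ with $w = [u,v] \in \Gamma(E)$, hence an element of $\Gamma(L)$.

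The only genuine difficulty is a careful bookkeeping of signs and the verification that the Cartan identities used above are valid in the Lie algebroid setting. I expect this to be routine: both identities are purely formal consequences of $({}^E\rd)^2 = 0$ and the graded derivation property of $\iota_u$, which in turn follow from the homological equation $\sbv{\Theta_0}{\Theta_0} = 0$ in \eqref{homological1}. Once these are in place, no further input beyond the compatibility condition \eqref{momsec1} is needed, and the theorem follows.
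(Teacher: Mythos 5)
Your proposal is correct and follows essentially the same route as the paper: isotropy of $L$ from the complete antisymmetry of $J$, and involutivity by showing that the Dorfman bracket of two graph sections is again a graph section once the compatibility condition \eqref{momsec1} is used. The only difference is presentational --- the paper evaluates the bracket as a graded derived bracket, invoking $\sbv{\Theta_0}{\underline{J}} = -\iota_{\rho}^{m+1}H$, whereas you unpack \eqref{operation3} with Lie algebroid Cartan calculus; just make sure your conventions for $\inner{J}{u}$ versus $\iota_u J$ and for the double contraction $\iota_u\iota_v(\iota_{\rho}^{m+1}H)$ are aligned with the paper's, since with the opposite choice the two $H$-terms add instead of cancelling and the sign in \eqref{momsec1} would appear flipped.
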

\begin{proof}
In fact, the inner product of two elements of $\Gamma(L)$, 
$u + \inner{J}{u}$ and $v + \inner{J}{v}$ for $u, v \in \Gamma(E)$
is 
\begin{align}
\inner{u + \inner{J}{u}}{v + \inner{J}{v}} &= 
\inner{u}{{J}{v}} + \inner{{J}{u}}{v} = 0,
\end{align}
from completely antisymmetricity of $J$.
Moreover the Dorfman bracket is computed by the derived bracket of the graded
functions,
\begin{align}
\courant{u + \inner{J}{u}}{v + \inner{J}{v}} &= 
j^* \sbv{\sbv{\underline{u + \inner{J}{u}}}{\Theta}}
{\underline{v + \inner{J}{v}}}
\nonumber \\ 
&= [u, v] + \inner{J}{[u, v]},
\end{align}
which is the element of $\Gamma(L)$ again.
Here we used $\sbv{\Theta_0}{j^* J} = - \iota_{\rho}^{m+1} H$
induced from ${}^E \rd J = - \bracket{\otimes^{m+1} \rho}{H}$.
\end{proof}

The Q-structure \eqref{homological} correctly gives the Lie algebroid structure with $H$, however there is no information of $J$.
The geometric condition, ${}^E \rd J = - \bracket{\otimes^{m+1} \rho}{H}$, is realized as a higher Dirac structure of a Lie $m$-algebroid induced by 
Equation \eqref{homological}.
A higher Dirac structure is not generally realized as a QP-manifold.
As a result, our Lie algebroid sigma model with WZ term cannot be formulated as an AKSZ sigma model since the AKSZ sigma model has a QP-manifold structure.
This is analogous to the twisted Poisson structure, which is a Dirac structure of the standard Courant algebroid on $TM \oplus T^*M$.
Though the standard Courant algebroid is a QP-manifold of degree 2, 
the twisted Poisson structure is not realized as any QP-manifold.
It is known that the twisted Poisson sigma model cannot be formulated as 
an AKSZ sigma model.

%%%%%%%%%%%%%%%%%%%%%%%%%%%%%%%%%%%%%%%%%%%%%%%%%%%%%%%%%%%%%%%%%%%%
%%%%%%%%%%%%%%%%%%%%%%%%%%%%%%%%%%%%%%%%%%%%%%%%%%%%%%%%%%%%%%%%%%%%
%%%%%%%%%%%%%%%%%%%%%%%%%%%%%%%%%%%%%%%%%%%%%%%%%%%%%%%%%%%%%%%%%%%%
\section{Hamiltonian formalism}\label{Hamform}
In this section, 
the Hamiltonian formalism and constraints are analyzed
to make the action functional \eqref{classicalactionofHLASM0} consistent.
We show that the classical action \eqref{classicalactionofHLASM0} is consistent if the target space geometric data satisfy Equation \eqref{momsec1}, i.e., the target space is a pre-multisymplectic manifold with a Lie algebroid action and a bracket-compatible $E$-flux.

Take the worldvolume, $\Sigma = \bR \times T^n$ or $\Sigma = S^1 \times T^n$. 
% physical or mathematical setting.
Canonical conjugate momenta of $X^i$ and $A_i$ 
\footnote{$Z_i$ and $Y_a$ appear as canonical conjugates
of $X^i$ and $A_i$.}
are
\begin{eqnarray}
P_{X i} &=& \frac{\delta S}{\delta \dot{X}^i} 
= \left(Z_{i} + \frac{1}{n!} (-1)^{n} B_{ij_1\ldots j_n}(X) 
\rd X^{j_1} \wedge \ldots \wedge \rd X^{j_n}\right)^{(s)}
\nonumber \\ 
&=& \frac{1}{n!} \epsilon^{0 \mu_1 \ldots \mu_n} 
\left(Z_{\mu_1 \ldots \mu_n i} 
+ (-1)^{n} B_{ij_1\ldots j_n}(X) 
\partial_{\mu_1} X^{j_1} \wedge \ldots \wedge \partial_{\mu_n} X^{j_n}
\right),
\label{PB01}
\\
P_{A a}^{\mu} &=& \frac{\delta S}{\delta \dot{A_{\mu}}^a} 
= Y^{(s)}_a = \frac{1}{(n-1)!} \epsilon^{0 \mu \nu_2 \ldots \nu_n} Y
_{\nu_2 \ldots \nu_n a}.
\label{PB02}
\end{eqnarray}
where $(s)$ means coefficient functions of the space components of 
the differential forms on $\Sigma$. 
$\mu, \nu=1,\ldots,n$ are spatial indices on $\Sigma$
and $0$ is the time component.
%Spatial indices on $N$ are $\mu, \nu=1,\ldots,n+1$.

Substituting Equations \eqref{PB01} and \eqref{PB02} to the basic Poisson bracket of canonical quantities,
$\{x^I(\sigma), p_J(\sigma^{\prime})\}_{PB} = \delta^I_J \delta(\sigma - \sigma^{\prime})$,
we obtain Poisson brackets of fundamental fields,
\begin{align}
\{\xzero^i(\sigma), Z^{(s)}_{j}(\sigma^{\prime})\}_{PB} &= \delta^i{}_j \delta(\sigma - \sigma^{\prime}),
\label{PB1}
\\
\{A^{(s)a}(\sigma), Y^{(s)}_b(\sigma^{\prime}) \}_{PB} &= \delta^a{}_b \delta(\sigma - \sigma^{\prime})
= (-1)^{n-1} \{Y^{(s)}_b(\sigma), A^{(s)a}(\sigma^{\prime}) \}_{PB},
\label{PB2}
\\
\{Z^{(s)}_{i}(\sigma), Z^{(s)}_{j}(\sigma^{\prime})\}_{PB} &=
\frac{(-1)^n}{2 n!} H_{ijk_1\cdots k_{n}}(\xzero(\sigma)) (\rd X^{k_1} 
\wedge \ldots \wedge \rd X^{k_{n}})^{(s)}\delta^n(\sigma - \sigma^{\prime}).
\label{PB3}
\end{align}
The symplectic form corresponding to these Poisson brackets 
\eqref{PB1}--\eqref{PB3} is
\begin{eqnarray}
\omega &=& \int_{T^n} 
\left(\delta X^i \wedge \delta Z^{(s)}_{i}
+ \delta A^{(s)a} \wedge \delta Y^{(s)}_a
\right.
\nonumber \\ && 
\left.
+ \frac{(-1)^{n+1}}{n!} H_{i_1\ldots i_n jk}(\xzero) (\rd \xzero^{i_1}
\wedge \ldots \wedge \rd \xzero^{i_n})^{(s)} \delta \xzero^j \wedge \delta \xzero^k
\right).
\end{eqnarray}

\if0
\begin{eqnarray}
\{\int \lambda(\sigma) \xzero^i(\sigma), 
\int \lambda'(\sigma') Z^{(s)}_{j}(\sigma^{\prime})\}_{PB} &=& 
\delta^i{}_j \int \lambda \lambda'(\sigma),
\\
\{\int \lambda(\sigma) A^{(s)a}(\sigma), \int \lambda'(\sigma') Y^{(s)}_b(\sigma^{\prime}) \}_{PB} 
&=&  \delta^a{}_b \int \lambda \lambda'(\sigma),
\\
\{\int \lambda(\sigma) Z^{(s)}_{i}(\sigma), \int \lambda'(\sigma') Z^{(s)}_{j}(\sigma')\}_{PB} &=& 
\int \lambda \lambda'(\sigma) \frac{(-1)^n}{(n+1)!} H_{ijk_1\cdots k_{n}}(\xzero) (\rd X^{k_1} \wedge \ldots \wedge \rd X^{k_{n}})^{(s)},
\end{eqnarray}
Let $\lambda(\sigma)$ be a test differential form on $\Sigma$.
\fi
%$\alpha =1, \cdots, n$.
%
The canonical conjugates of time components $A^{(0) a}$, $Y^{(0)}_{a}$, 
$Z^{(0)}_{i}$ are $0$. These give primary constraints:
\begin{eqnarray}
P_{A^{(0)} a} &\approx& 0,
\qquad 
P_{Y^{(0)}}^a \approx 0,
\qquad
P_{Z^{(0)} i} \approx 0,
\label{primary3}
\end{eqnarray}
where $(0)$ denotes the time component of the field.
The Hamiltonian is proportional to constraints,
\begin{eqnarray}
{\cal H} = \int_{T^n} d^{n+1} \sigma
(Z_{0i} G_X^i + Y_{0a} G_A^a + A_0^a G_{Ya}).
%= \int_{S^1} d \sigma^1 A_{0i} G^i.
\label{hamiltonian}
\end{eqnarray}
Here $G$'s are constraints without time derivatives,
\begin{align}
G^i_X &:= (\rd X^i - \rho^i_a(X) A^{a})^{(s)} \, ,
 \label{constraint1}\\
G_{A}^a &:= (\rd A^a + \tfrac{1}{2} C^a_{bc}(X) A^b \wedge A^c)^{(s)} \, ,
\label{constraint2}
\\
G_{Ya} &:= \left(\rd Y_a + (-1)^{n} \rho^i_a(X) Z_i
+ (-1)^{n-1} C_{ab}^c(X) Y_c \wedge A^b + \frac{1}{n!} J_{a b_2\ldots b_{n+1}}(\xzero) A^{b_2} \wedge \ldots \wedge 
A^{b_{n+1}} \right)^{(s)},
\label{constraint3}
\end{align}
which are spatial parts of equations of motion.
%We obtain constraints these momenta $\approx 0$.
The secondary constraints are calculated by computing Poisson brackets with primary constraints \eqref{primary3} and the Hamiltonian $\calH$. The secondary constraints are $G^i_X$, $G_{A}^a$ and $G_{Ya}$,
\begin{align}
G^i_X & \approx 0, 
\qquad 
G_{A}^a \approx 0,
\qquad 
G_{Ya} \approx 0.
\end{align}

For the consistency condition of the mechanics, 
we require that $G^i_X$, $G_{A}^a$ and $G_{Ya}$ are first class constraints, i.e., Eqs.~\eqref{constraint1}--\eqref{constraint3} generate a closed algebra under Poisson brackets.

We suppose that a Lie algebroid structure on the target space vector bundle $E$. $\rho^i_a$ and $C_{ab}^c$ are local coordinate expressions of the anchor map and structure functions satisfying Equations \eqref{LAidentity1} and \eqref{LAidentity2}.
Moreover suppose that $H$ in the WZ term is a closed $(n+2)$-form.
Under the above assumptions, Poisson brackets of constraints $G^i_X$, $G_{A}^a$ and $G_{Ya}$ are computed using the fundamental Poisson brackets
\eqref{PB1}--\eqref{PB3}.
They are the first class if and only if $J$ satisfies the bracket-compatible condition \eqref{momsec1}.
In fact, under Equation \eqref{momsec1}, we obtain the following Poisson brackets of three constraints,
\begin{eqnarray}
&& \{G^i_X(\sigma), G^j_X(\sigma^{\prime}) \}_{PB} = 0,
\\
&& \{G^i_X(\sigma), G_{A}^a(\sigma^{\prime}) \}_{PB} = 0,
\\
&& \{G^i_X(\sigma), G_{Ya}(\sigma^{\prime}) \}_{PB} =
(-1)^{n-1} \partial_j \rho^i_a G^j_X(\sigma)
\delta^n(\sigma - \sigma^{\prime}),
\\
&& \{G_{A}^a(\sigma), G_{A}^b(\sigma^{\prime}) \}_{PB} = 0,
\\
&& \{G_{A}^a(\sigma), G_{Yb}(\sigma^{\prime}) \}_{PB} = 
(-1)^n [\partial_i C^a_{bc} A^{c} \wedge G_X^i(\sigma)
+ C^a_{bc} G_{A}^c(\sigma) ]^{(s)} \delta^n(\sigma - \sigma^{\prime}),
\\
&& \{G_{Ya}(\sigma), G_{Yb}(\sigma^{\prime}) \}_{PB} = 
\left[\left(\partial_i C_{ab}^c Y_c + \frac{(-1)^{n-1}}{n!} \partial_i 
J_{a b c_3\ldots c_{n+1}} A^{c_3} \wedge \ldots \wedge A^{c_{n+1}} \right) 
\wedge G_X^i
\right.
\nonumber \\ && \qquad
\left.
+ (-1)^{n-1} C_{ab}^c G_{Yc} 
+ \frac{(-1)^{n-2}}{(n-1)!} J_{a b c e_4\ldots e_{n+1}} A^{e_4} \wedge \ldots \wedge A^{e_{n+1}} \wedge G_A^c
\right.
\nonumber \\ && \qquad
\left.
+ \frac{(-1)^{n-1}}{(n+1)!} \sum_{m=1}^n \rho^i_a \rho^j_b
H_{ij k_1\ldots k_m k_{m+1} \ldots k_{n}} \rd X^{k_1}\wedge \ldots
\wedge \rd X^{k_{m-1}} \wedge G_X^{k_{m}} 
\right.
\nonumber \\ && \qquad
\left.
\wedge  \rho^{k_{m+1}}_{c_{m+1}} A^{c_{m+1}} \wedge 
\ldots \wedge \rho^{k_{n}}_{c_{n}} A^{c_{n}}  \right]^{(s)} (\sigma) 
\delta^n(\sigma - \sigma^{\prime}),
\label{PBconstraint}
\end{eqnarray}
which shows that all the constraints are the first class.
Here $\sigma^{\mu}, \sigma^{'\mu}$ are local coordinates on $T^n$ and
all the fields are spatial components.
Equation \eqref{momsec1} is necessary for closedness of the final Poisson bracket \eqref{PBconstraint}.
The detail computation of Equation \eqref{PBconstraint} appears in Appendix \ref{CEq}.

%%%%%%%%%%%%%%%%%%%%%%%%%%%%%%%%%%%%%%%%%%%%%%%%%%%%%%%%%%%%%%%%%%%%
\section{Target space covariantization}\label{covariantconstraints}
Constraints and Poisson brackets are rewritten by geometric quantities of the target Lie algebroid by introducing a connection $\nabla$ on $E$.

Let $\omega = \omega^b_{ai} dx^i \otimes e^a \otimes e_b$ be 
the connection $1$-form for the connection $\nabla$. 
\if0
Local coordinate expressions of covariant derivatives are
\begin{eqnarray}
\nabla_i \alpha^a &=& \partial_i \alpha^a + \omega_{bi}^a \alpha^b,
\\
\nabla_i \beta_a &=& \partial_i \beta_a - \omega_{ai}^b \beta_b,
\end{eqnarray}
where $\alpha = \alpha^a e_a$.
\fi
Let $s, s' \in \Gamma(E)$.
Additional to the following ordinary curvature,
\beqa
R(s, s^{\prime}) &:=& [\nabla_s, \nabla_{s^{\prime}}] - \nabla_{[s, s^{\prime}]}, 
\eeqa
in a Lie algebroid, the following $E$-torsion $T$, the $E$-curvature and the basic curvature $S$ are defined, \cite{Blaom, Kotov:2016lpx}
\beqa
T(s, s^{\prime}) &:=& {}^E\nabla_s s^{\prime} - {}^E\nabla_{s^{\prime}} s
- [s, s^{\prime}],
\\
{}^ER(s, s^{\prime}) &:=& [{}^E\nabla_s, {}^E\nabla_{s^{\prime}}] - {}^E\nabla_{[s, s^{\prime}]}, 
\\
S(s, s^{\prime}) &:=& \calL_s (\nabla s^{\prime}) 
- \calL_{s^{\prime}} (\nabla s) 
- \nabla_{\rho(\nabla s)} s^{\prime} + \nabla_{\rho(\nabla s^{\prime})} s
- \nabla[s, s^{\prime}] 
\nonumber \\ &&
= (\nabla T + 2 \mathrm{Alt} \, \iota_\rho R)(s, s^{\prime}).
\eeqa
Local coordinate expressions appear in Appendix \ref{geometryofLA}.
\if0
\beqa 
T_{ab}^c &\equiv& 
- C_{ab}^c + \rho_a^i \omega_{bi}^c - \rho_b^i \omega_{ai}^c,
\\
R_{ijb}^a &\equiv& 
\partial_i \omega_{aj}^b - \partial_j \omega_{ai}^b 
+ \omega_{ai}^c \omega_{cj}^b - \omega_{aj}^c \omega_{ci}^b,
\\
S_{iab}^{c} &\equiv& 
\nabla_i T_{ab}^c + \rho_b^j R_{ija}^c - \rho_a^j R_{ijb}^c,
 \nonumber \\
&=& - \partial_i C^c_{ab} + \omega_{di}^c C_{ab}^d - \omega_{ai}^d C_{db}^c - \omega_{bi}^d C_{ad}^c
+ \rho_a^j \partial_j \omega_{bi}^c
- \rho_b^j \partial_j \omega_{ai}^c
\nonumber \\ && 
+ \partial_i \rho_a^j \omega_{bj}^c
- \partial_i \rho_b^j \omega_{aj}^c
+ \omega_{ai}^d \rho_b^j \omega_{bj} ^c
- \omega_{bi}^d \rho_b^j \omega_{aj} ^c.
\eeqa
\fi
%Refer to Appendix for various formulas.

We can rewrite constraints as follows.
Since $G^i_X$ is already covariant under the target space diffeomorphism,
the local coordinate expression is the same as Equation \eqref{constraint1}.
$G_{A}^a$ and $G_{Ya}$ are written as
\begin{align}
G_{A}^{\nabla a} &:= (\nabla A^a - \tfrac{1}{2} T^a_{bc}(X) A^b \wedge A^c)^{(s)} \, ,
\label{constraint12}
\\
G^{\nabla}_{Ya} &:= \left(\nabla Y_a + (-1)^{n} \rho^i_a(X) Z_i
+ (-1)^{n} T_{ab}^c(X) Y_c \wedge A^b 
\right.
\nonumber \\ & \qquad
\left.
+ \frac{1}{n!} J_{a b_2\ldots b_{n+1}}(\xzero) A^{b_2} \wedge \ldots \wedge 
A^{b_{n+1}}
\right)^{(s)},
\label{constraint13}
\end{align}
where 
\begin{align}
\nabla A^a &:= \rd A^a - \omega_{bi}^a A^b \rd X^i,
\\
\nabla Y_a &:= \rd Y_a + (-1)^{n} \omega_{ai}^c Y_c \rd X^i,
\end{align}
and the covariantized constraints are given by
\begin{align}
G_{A}^{\nabla a} &= G_{A}^{a} - (\omega_{bi}^a(X) A^b G_X^i)^{(s)} \, ,
\\
G^{\nabla}_{Ya} &= G_{Ya} 
+ \left( (-1)^{n} \omega_{ai}^c Y_c G_X^i \right)^{(s)}.
\end{align}
If we impose the bracket-compatible condition \eqref{momsec1},
we obtain the following Poisson brackets,
\begin{eqnarray}
&& \{G^i_X(\sigma), G^j_X(\sigma^{\prime}) \}_{PB} = 0,
\label{covPB1}
\\
&& \{G^i_X(\sigma), G_{A}^{\nabla a}(\sigma^{\prime}) \}_{PB} = 0,
\\
&& \{G^i_X(\sigma), G_{Ya}^{\nabla}(\sigma^{\prime}) \}_{PB} =
(-1)^{n-1} \nabla_j \rho^i_a G^j_X(\sigma)
\delta^n(\sigma - \sigma^{\prime}),
\\
&& \{G_{A}^{\nabla a}(\sigma), G_{A}^{\nabla b}(\sigma^{\prime}) \}_{PB} = 0,
\\
&& \{G_{A}^{\nabla a}(\sigma), G_{Yb}^{\nabla}(\sigma^{\prime}) \}_{PB} = 
(-1)^{n+1} [S_{ibc}^a A^{c} \wedge G_X^i(\sigma)
+ T^a_{bc} G_{A}^{\nabla c}(\sigma) ]^{(s)} \delta^n(\sigma - \sigma^{\prime}),
\\
&& \{G_{Ya}^{\nabla}(\sigma), G_{Yb}^{\nabla}(\sigma^{\prime}) \}_{PB} = 
\left[\left(- S_{iab}^c Y_c + \frac{(-1)^{n-1}}{n!} \nabla_i 
J_{a b c_3\ldots c_{n+1}} A^{c_3} \wedge \ldots \wedge A^{c_{n+1}} \right) 
\wedge G_X^i
\right.
\nonumber \\ && \qquad
\left.
+ (-1)^{n} T_{ab}^c G_{Yc}^{\nabla} 
+ \frac{(-1)^{n-2}}{(n-1)!} J_{a b c e_4\ldots e_{n+1}} A^{e_4} \wedge \ldots \wedge A^{e_{n+1}} \wedge G_A^{\nabla c}
\right.
\nonumber \\ && \qquad
\left.
+ \frac{(-1)^{n-1}}{(n+1)!} \sum_{m=1}^n \rho^i_a \rho^j_b
H_{ij k_1\ldots k_m k_{m+1} \ldots k_{n}} \rd X^{k_1}\wedge \ldots
\wedge \rd X^{k_{m-1}} \wedge G_X^{k_{m}} 
\right.
\nonumber \\ && \qquad
\left.
\wedge  \rho^{k_{m+1}}_{c_{m+1}} A^{c_{m+1}} \wedge 
\ldots \wedge \rho^{k_{n}}_{c_{n}} A^{c_{n}}  \right]^{(s)} (\sigma) 
\delta^n(\sigma - \sigma^{\prime}),
\label{covPB2}
\end{eqnarray}
which shows all the constraints are the first class.
Here
$\nabla_i \rho^j_a = \partial_i \rho^j_a + \omega_{ai}^b \rho^j_b$.
The coefficients of Poisson brackets are written by $\rho$, $H$, $J$, $\nabla$, $T$ and $S$.
Therefore we obtain the following result
from Equations \eqref{covPB1}--\eqref{covPB2}.
\begin{theorem}
Suppose that the target space has a Lie algebroid structure and $\rd H=0$. 
Then, constraints $G^i_X$, $G_{A}^a$ and $G_{Ya}$ are the first class if and only if $J$ satisfies the bracket-compatible condition \eqref{momsec1}.
\end{theorem}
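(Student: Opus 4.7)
The plan is to compute the six Poisson brackets among $G^i_X$, $G^a_A$, $G_{Ya}$ using the fundamental brackets \eqref{PB1}--\eqref{PB3} and to read off the precise geometric identity that forces closure. Since
\begin{align*}
G^{\nabla a}_A &= G^a_A - (\omega^a_{bi}(X) A^b G^i_X)^{(s)}, &
G^{\nabla}_{Ya} &= G_{Ya} + ((-1)^n \omega^c_{ai}(X) Y_c G^i_X)^{(s)},
\end{align*}
the covariant and non-covariant constraint families generate the same constraint ideal, so first-classness of one system is equivalent to first-classness of the other. It therefore suffices to prove the claim for the bare constraints \eqref{constraint1}--\eqref{constraint3}; the covariantized expressions \eqref{covPB1}--\eqref{covPB2} involving $T$, $S$, $\nabla_i \rho^j_a$, $\nabla_i J$ then follow from the results of Section \ref{Hamform} by adding and subtracting $\omega$-terms proportional to $G^i_X$.

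For the \emph{if} direction, I would dispatch the easy brackets first: $\{G_X, G_X\}$, $\{G_X, G_A\}$, $\{G_A, G_A\}$ vanish identically by \eqref{PB1}--\eqref{PB2}; the brackets $\{G_X, G_Y\}$ and $\{G_A, G_Y\}$ close on constraints using only the Lie algebroid identity \eqref{LAidentity1} for $\rho$ and the definition of the $E$-torsion. The pivotal computation is $\{G_{Ya}, G_{Yb}\}$, which collects four kinds of contributions: (i) a piece from $\{Z,Z\}$ coming from the $\rho^i_a Z_i$ terms, bringing in $H$; (ii) pieces from $\{Z, J(X) A^{n+1}\}$ and $\{Z, C(X) Y\wedge A\}$, bringing in $\partial_i J$ and $\partial_i C$; (iii) pieces from $\{C(X) Y\wedge A, C(X) Y\wedge A\}$, bringing in $C^2$ terms that reduce via the Jacobi identity \eqref{LAidentity2}; and (iv) mixed pieces from $\{C(X) Y\wedge A, J(X) A^{n+1}\}$. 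After careful sign tracking, the terms that are not manifestly constraint-proportional assemble into a single expression of the form
\[
\bigl({}^E\rd J + \bracket{\otimes^{n+2}\rho}{H}\bigr)_{a b c_3 \ldots c_{n+2}}\; A^{c_3}\wedge\cdots\wedge A^{c_{n+2}},
\]
by direct application of the formula \eqref{LAdifferential} for ${}^E\rd$; this vanishes precisely under \eqref{momsec1}. The closedness hypothesis $\rd H=0$ is what guarantees the $H$-contribution is absorbed at this algebraic level rather than generating a tertiary constraint.

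For the converse, one runs the same computation in reverse. On the Hamiltonian constraint surface the auxiliary fields $A^a$ remain arbitrary (they are canonically independent of $X^i$, $Z_i$, $Y_a$), so first-classness forces the coefficient of $A^{c_3}\wedge\cdots\wedge A^{c_{n+2}}$ in the anomalous piece to vanish identically on $M$, which is exactly \eqref{momsec1}. The main obstacle is the combinatorial bookkeeping in $\{G_{Ya}, G_{Yb}\}$: distributing the Poisson bracket across the $(n{+}2)$ distinct terms of the constraint produces numerous sign factors from Leibniz rules for $n$-form brackets, and recognizing the surviving combination as ${}^E\rd J$ requires a clean use of antisymmetrization together with \eqref{LAidentity1}--\eqref{LAidentity2}. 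The detailed calculation is the content of Appendix \ref{CEq}, referenced in the previous section.
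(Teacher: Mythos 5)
Your proposal takes essentially the same route as the paper: a direct evaluation of the constraint algebra with the fundamental brackets \eqref{PB1}--\eqref{PB3}, with the only nontrivial closure obstruction in $\{G_{Ya},G_{Yb}\}$, where (modulo terms proportional to $G_X^i$, which absorb the mismatch between the $H(\rd X,\ldots,\rd X)$ contribution from $\{Z,Z\}$ and the $H(\rho A,\ldots,\rho A)$ contribution produced by ${}^E\rd J$) the anomaly is precisely the bracket-compatibility condition \eqref{momsec1}; this is the computation recorded in \eqref{PBconstraint} and Appendix \ref{CEq}. Your reduction of the covariantized constraints to the bare ones and the converse argument via arbitrariness of $A^a$ match the logic of Sections \ref{Hamform} and \ref{covariantconstraints}.
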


%%%%%%%%%%%%%%%%%%%%%%%%%%%%%%%%%%%%%%%%%%%%%%%%%%%%%%%%%%%%%%%%%%%%
%%%%%%%%%%%%%%%%%%%%%%%%%%%%%%%%%%%%%%%%%%%%%%%%%%%%%%%%%%%%%%%%%%%%
%%%%%%%%%%%%%%%%%%%%%%%%%%%%%%%%%%%%%%%%%%%%%%%%%%%%%%%%%%%%%%%%%%%%
\section{Gauge transformation}
In this section, we discuss the Lagrangian formalism.

\if0
The general formalism is as follows. 
If the Hamiltonian $\calH \approx 0$, we can consider the following 
simple model of mechanics with constraints $G_a$,
\begin{eqnarray}
S &=&
\int_{} dt
(p_i \dot{q}^i - \lambda^a G_a),
\label{consraintaction}
\end{eqnarray}
where $(q^i, p_i)$ are canonical conjugates, 
$G_a$ are function of $(q^i, p_i)$ and first class constraints which satisfy
\begin{eqnarray}
\{G_a, G_b \} &=& C_{ab}^c G_c,
\end{eqnarray}
and $\lambda^a$ is a Lagrange multiplier.

Gauge transformations of $q^i, p_i$ are given by
\begin{eqnarray}
\delta p_i &=& \left\{\epsilon^a G_a, p_i \right\} + \sigma_i^a G_a,
\\
\delta q^i &=& \left\{\epsilon^a G_a, q^i \right\} + \tau^{ia} G_a,
\end{eqnarray}
where $\epsilon^a(t)$ is a gauge parameter, 
$\sigma_i^a$ and $\tau^{ia}$ are arbitrary functions of $q^i$ and $p_i$.
Requirement of gauge invariance of the action $S$, $\delta S=0$, 
gives the gauge transformation of $\lambda^a$.
\fi

The gauge transformations are given from constraints and Poisson brackets
in the Hamiltonian formalism.
From the general theory of the analytical mechanics, a gauge transformation of
a field $\Phi$ in the Lagrangian formalism is computed by the Poisson bracket 
of constraints and $\Phi$,
\begin{eqnarray}
\delta \Phi &=& \left\{\epsilon^a G_a, \Phi \right\} + \tau^a(\Phi) G_a,
\end{eqnarray}
where we should carefully fix freedom of the term $\tau^a(\Phi) G_a$,
which is the freedom of on-shell vanishing trivial gauge transformations.
$\tau^a(\Phi)$ is an arbitrary function of fields.
These ambiguities and problems were discussed in the paper \cite{Ikeda:2020eft} for the twisted Poisson sigma model.
In the twisted Poisson sigma model, $\tau^a(\Phi)$ is a nonzero function.
The situation for our twisted Lie algebroid topological sigma model
is similar to the twisted Poisson sigma model.
We need a nontrivial term $\tau^a(\Phi)$ and it is fixed by imposing 
the Lorentz, or diffeomorphism covariance of gauge transformations on $\Sigma$.

Using this formula, we can compute gauge transformations of each field from constraints in Section \ref{Hamform}. 
We need three gauge parameters corresponding to constraints $G_{Ya}$, $G_A^a$
and $G_X^i$, 
$c^a \in \Gamma(\Sigma, X^*(E))$,
$t_a \in \Gamma(\wedge^{n-2} T^*\Sigma, X^*(E^*))$,
$w_i \in \Gamma(\wedge^{n-1} T^*\Sigma, X^*(T^*M))$.
% gauge transformation freedom of $A^a$, $Y_a$ and $Z_i$.
$c^a$ is a function, $t_a$ is an $(n-2)$-form and 
$w_i$ is an $(n-1)$-form.

Gauge transformations of fundamental fields are given by
\begin{align}
\delta X^i &= \rho^i_a(X) c^a,
\label{gauge01}
\\
\delta A^a &= \rd c^a + C^a_{bc}(X) A^b c^c,
%+ \omega^a_{bi}(X) c^b F_X^i,
\label{gauge02}
\\
\delta Y_a &= \rd t_a + (-1)^{n} \rho^i_a(X) w_i
+ C_{ab}^c(X) (-Y_c c^b + (-1)^{n} t_c \wedge A^b)
\nonumber \\
& 
+ \frac{(-1)^n}{(n-1)!} J_{a b_2\ldots b_{n+1}}(X) A^{b_2} \wedge \ldots \wedge A^{b_n} c^{b_{n+1}},
\label{gauge03}
\\
\delta Z_i &= \rd w_i 
+ \partial_i \rho^j_a (- Z_j \wedge c^a + (-1)^n w_j \wedge A^a)
+ \frac{1}{2} \partial_i C^a_{bc}
(2 Y_a \wedge A^b c^c + (-1)^{n} t_a \wedge A^b \wedge A^c)
\nonumber \\
& + \frac{1}{n!} \partial_i 
J_{a_1 \ldots a_{n+1}}(X) A^{a_1} \wedge \ldots \wedge A^{a_n} c^{a_{n+1}}
\nonumber \\
&- \frac{1}{(n+1)!} H_{ij_1\ldots j_n k} 
\sum_{m=0}^{n} \rd X^{j_1} \wedge \ldots \wedge \rd X^{j_m}
\wedge \rho^{j_{m+1}}_{a_{m+1}} A^{a_{m+1}} \wedge \ldots 
\wedge \rho^{j_{n}}_{a_{n}} A^{a_{n}} 
\rho^k_b c^b
\nonumber \\ 
&= \rd w_i 
+ \partial_i \rho^j_a (- Z_j \wedge c^a + (-1)^n w_j \wedge A^a)
+ \frac{1}{2} \partial_i C^a_{bc}
(2 Y_a \wedge A^b c^c + (-1)^{n} t_a \wedge A^b \wedge A^c)
\nonumber \\
& + \frac{1}{n!} \left(\partial_i J_{a_1 \ldots a_{n+1}}(X) 
- \rho^{j_{1}}_{a_{1}} \ldots \rho^{j_{n}}_{a_{n}} 
\rho^k_{a_{n+1}} H_{ij_1\ldots j_n k} 
\right) A^{a_1} \wedge \ldots \wedge A^{a_n} c^{a_{n+1}}
- \frac{1}{(n+1)!} H_{ij_1\ldots j_n k} 
\nonumber \\
& \times 
\sum_{m=1}^{n} (n-m+1) \rd X^{j_1} \wedge \ldots \wedge \rd X^{j_{m-1}}
\wedge F_X^{j_m} \wedge 
\rho^{j_{m+1}}_{a_{m+1}} A^{a_{m+1}} \wedge \ldots 
\wedge \rho^{j_{n}}_{a_{n}} A^{a_{n}} 
\rho^k_b c^b \, .
\label{gauge04}
\end{align}
In fact, the action functional \eqref{classicalactionofHLASM} 
is gauge invariant 
$\delta S=0$ under these gauge transformations 
\eqref{gauge01}--\eqref{gauge04}.
\if0
Here, rewriting $H$ as a local $(n+1)$-form $B$ such that $H=dB$,
we use the formula of the gauge transformation of $B$,
\begin{eqnarray}
&& \delta \left(\int_{\Sigma} \frac{1}{(n+1)!} B_{i_1\ldots i_{n+1}}(\xzero) 
\rd X^{i_1} \wedge \ldots \wedge \rd X^{i_{n+1}} \right)
\nonumber \\
&\sim&
\int_{\Sigma} \frac{1}{(n+1)!} \partial_{[j} B_{i_1\ldots i_{n+1}]}(\xzero) 
\rd X^{i_1} \wedge \ldots \wedge \rd X^{i_{n+1}} \rho^j_a c^a
\nonumber \\
&=&
\int_{\Sigma} \frac{1}{(n+1)!} H_{j i_1 \ldots i_{n+1}}
\rd X^{i_1} \wedge \ldots \wedge \rd X^{i_{n+1}} \rho^j_a c^a,
\end{eqnarray}
up to an integration of total derivative terms.
\fi

Computations of the gauge algebra are rather complicated, however
from the general theory of the analytical mechanics,
the first class constraints in the Hamiltonian formalism
give an on-shell closed gauge algebra such that 
$[\delta_1, \delta_2] \approx \delta_3$ in the Lagrangian formalism.

%%%%%%%%%%%%%%%%%%%%%%%%%%%%%%%%%%%%%%%%%%%%%%%%%%%%%%%%%%%%%%%%%%%%
%%%%%%%%%%%%%%%%%%%%%%%%%%%%%%%%%%%%%%%%%%%%%%%%%%%%%%%%%%%%%%%%%%%%
%%%%%%%%%%%%%%%%%%%%%%%%%%%%%%%%%%%%%%%%%%%%%%%%%%%%%%%%%%%%%%%%%%%%
\section{Manifestly target space covariant gauge transformation}
Gauge transformations are written to target space covariant ones by 
introducing a connection $\nabla$ on $E$ as in Section
\ref{covariantconstraints}.
%Let its connection $1$-form be $\omega_{ai}^b(x) \rd x^i$.
In gauge transformations of the basis of $E$ and $E^*$,
terms using the connection $1$-form $\omega_{ai}^b$ appear as 
follows,
\begin{align}
\delta^{\nabla} e_a &= \omega_{ai}^b(X) \delta X^i e_b
= \omega_{ai}^b(X) \rho^i_c c^c e_b,
\label{gfofbasis1}
\\
\delta^{\nabla} e^a 
&= - \omega_{bi}^a(X) \delta X^i e^b
= - \omega_{bi}^a(X) \rho^i_c c^c e^b.
\label{gfofbasis2}
\end{align}

The gauge transformation of $X^i$, Equation \eqref{gauge01}, is already covariant $\delta^{\nabla} X^i = \delta X^i$.
The covariant gauge transformation of $A^a$ is
\begin{eqnarray}
\delta^{\nabla} A^a 
&=& \rd c^a + C^a_{bc}(X) A^b c^c + \omega^a_{bi}(X) c^b F_X^i.
\label{gauge22}
\end{eqnarray}
In fact, using transformations of basis \eqref{gfofbasis1},
% and \eqref{gauge22},
the gauge transformation of the coordinate independent 
form $A=A^a \otimes e_a$ is calculated as follows:
\begin{align}
\delta^{\nabla} A &= \delta^{\nabla} (A^a \otimes e_a)
\nonumber \\
&= \delta^{\nabla} A^a \otimes e_a + A^a \otimes \delta^{\nabla} e_a
\nonumber \\
&= (\rd c^a + C^a_{bc}(X) A^b c^c 
+ \omega^a_{bi}(X) c^b F_X^i + \rho^i_b \omega^a_{ci}(X) A^c c^b) \otimes e_a
\nonumber \\
&= (\nabla c^a - T^a_{bc}(X) A^b c^c) \otimes e_a
\nonumber \\ 
&= \nabla c - X^* T(A,c),
\label{covtransA}
\end{align}
where $\nabla c^a = \rd c^a + \omega_{bi}^a \rd X^i c^b$.
Equation \eqref{covtransA} is covariant under the diffeomorphism on $M$ and coordinate transformations on the fiber of $E$. For instance, 
$\omega_{ai}^b$ is transformed as
$\omega_{ai}^{'b} \rd x^i= (M^{-1})^c_a \omega_{ci}^{d} \rd x^i M_d^b 
+ (M^{-1})^c_a \rd M_c^b$
under a transition function $M_a^b(X)$ of the bundle,
and $A^a$ is transformed as $A^{\prime a} = M_b^a(X) A^b$, etc. 
Using transformations of all fields and functions under the 
transition function $M_a^b(X)$, 
we can check $\delta^{\nabla} A$ is invariant under coordinate changes
on the target vector bundle.
%a transition function $M_a^b(X)$ of the bundle.

For $Y$, a similar calculation gives the following covariant gauge transformation,
\begin{eqnarray}
\delta^{\nabla} Y_a &=& \rd t_a + (-1)^n \rho^i_a(X) w_i
+ C_{ab}^c(X) (-Y_c c^b + A^b t_c)
\nonumber \\
&& 
+ \frac{1}{(n-1)!} J_{a b_2\ldots b_{n+1}}(X) A^{b_2} \wedge \ldots \wedge A^{b_n} c^{b_{n+1}}
+ (-1)^{n-1} \omega_{ai}^b F_X^i t_b.
\label{gauge23}
\end{eqnarray}
We can check the coordinate independent covariant gauge transformation,
\begin{align}
\delta^{\nabla} Y 
&= \delta^{\nabla} (Y_a \otimes e^a)
\nonumber \\
&= \left(\nabla t_a + (-1)^n \rho^i_a(X) w_i^{\nabla} 
- T_{ab}^c(X) (-Y_c c^b + A^b t_c)
\right.
\nonumber \\ & 
\left.
+ \frac{1}{n!} J_{a b_1\ldots b_{n+1}}(X) A^{b_1} \wedge \ldots \wedge A^{b_n} c^{b_{n+1}} \right) \otimes e^a,
\nonumber \\
&= \nabla t + (-1)^n \iota_{X^* \rho} w^{\nabla} 
+ X^* T (Y, c) - X^* T(A, t)
+ X^* J (A, \ldots, A, c),
\label{covtransY}
\end{align}
where
\begin{eqnarray}
\nabla t_a &=& \rd t_a - \omega_{ai}^b \rd X^i t_b,
\\
w_i^{\nabla} &=& w_i + (-1)^{n-1} \omega_{bi}^c (-Y_c c^b + A^b t_c).
\end{eqnarray}
Similarly, we obtain the covariant gauge transformation of $Z$ as
\begin{eqnarray}
%\delta (Z_i \otimes e^i) &=& 
\delta^{\nabla} Z_i 
%\otimes e^i + Z_i \otimes \delta e^i
%\nonumber \\ 
%&=& \delta Z_i \rd X^i + Z_i 
%+ Z_i (\partial_j \rho^i_a \rd X^j c^a + \rho^i_a \rd c^a)
%\nonumber \\ 
&=& \nabla w_i^{\nabla} 
+ \nabla_i \rho^j_a (-Z_j \wedge c^a + (-1)^n w_j^{\nabla}  \wedge A^a)
- \tfrac{1}{2} S^a_{ibc}
(2 Y_a \wedge A^b c^c + (-1)^{n} A^b \wedge A^c t_a)
\nonumber \\
&& + \frac{1}{n!} \left(\nabla_i J_{a_1 \ldots a_{n+1}}(X) 
- \rho^{j_{1}}_{a_{1}} \ldots \rho^{j_{n}}_{a_{n}} \rho^{j_{n+1}}_{a_{n+1}} H_{ij_1\ldots j_{n+1}} 
\right) A^{a_1} \wedge \ldots \wedge A^{a_n} c^{a_{n+1}}
- \frac{1}{(n+1)!} H_{ij_1\ldots j_n k} 
\nonumber \\
&& \times 
\sum_{m=1}^{n} (n-m+1) \rd X^{j_1} \wedge \ldots \wedge \rd X^{j_{m-1}}
\wedge F_X^{j_m} \wedge 
\rho^{j_{m+1}}_{a_{m+1}} A^{a_{m+1}} \wedge \ldots 
\wedge \rho^{j_{n}}_{a_{n}} A^{a_{n}} 
\rho^k_b c^b \, .
\end{eqnarray}
The coordinate independent form is 
\begin{eqnarray}
%\delta (Z_i \otimes e^i) &=& 
\delta^{\nabla} Z
&=& \nabla w^{\nabla} 
- \iota_{X^* \nabla \rho(c)} Z + \iota_{X^* \nabla \rho(A)} w^{\nabla}
- X^* S(Y, A, c) + (-1)^{n} X^* S(t, A, A)
\nonumber \\
&& + X^* \nabla J(A, \ldots, A, c) 
- %(-1)^n 
\iota_{X^* \rho(c) }\iota_{X^* \rho(A)}^{n} H
\nonumber \\
&& 
+ \sum_{m=1}^{n} (n-m+1) 
(-1)^{n} \iota_{X^* \rho (c)} \iota_{F_X} \iota_{X^* \rho (A)}^{(n-m)} H \, .
\label{covtransZ}
\end{eqnarray}
We obtain invariant coordinate independent gauge transformations \eqref{covtransA}, \eqref{covtransY} and \eqref{covtransZ}.

%%%%%%%%%%%%%%%%%%%%%%%%%%%%%%%%%%%%%%%%%%%%%%%%%%%%%%%%%%%%%%%%%%%%%
%%%%%%%%%%%%%%%%%%%%%%%%%%%%%%   SEC      %%%%%%%%%%%%%%%%%%%%%%%%%%
%%%%%%%%%%%%%%%%%%%%%%%%%%%%%%%%%%%%%%%%%%%%%%%%%%%%%%%%%%%%%%%%%%%%%
\section{Conclusion and discussion}
We have constructed an $(n+1)$ dimensional topological sigma model with a Lie algebroid structure, an $E$-flux and the WZ term, generalizing the twisted Poisson sigma model and the twisted $R$-Poisson sigma model. The Poisson manifold target space is generalized to a Lie algebroid target space.
Moreover, from the consistency condition of constraints, we fixed a consistency condition of the $E$-flux, the WZ term and other coefficient functions. They are universal geometric conditions of compatibility of $E$-differential forms with a pre-multisymplectic structure under a Lie algebroid action. 
We pointed out that they were regarded as a Lie algebroid generalization of parts of the momentum map theory on the multi-symplectic manifold.
We will be able to understand and apply this result to geometric description of higher fluxes and dualities in higher dimensions.

In general, a higher dimensional topological sigma model of AKSZ type has a higher $L_\infty$-algebroid structure. If we deform the theory adding the WZ term to the action, the AKSZ construction does not work. We need to modify the AKSZ construction of the BV formalism for topological sigma models with the WZ term. 
Though the BFV and BV formalisms of the two dimensional twisted Poisson sigma model were geometrically constructed \cite{Ikeda:2019czt}, they are still open 
in higher dimensional topological sigma models with WZ term.
In order to construct the BFV and BV in higher dimensions, geometric analysis of compatibility conditions of the Lie $n$-, or $L_\infty$-algebroid structure with the pre-multisymplectic structure may be a key point.
The result in this paper gives a new insight and is one step.
The construction of the BV and BFV formalism of the twisted Lie algebroid sigma model and the twisted Lie-$n$, or $L_\infty$-algebroid sigma model are an important future problem for analysis of higher dimensional duality physics.

%%%%%%%%%%%%%%%%%%%%%%%%%%%%%%%%%%%%%%%%%%%%%%%%%%%%%%%%%%%%%%%%%%%%%
%%%%%%%%%%%%%%%%%%%%%%%%%%%%%%%  ACKNOW %%%%%%%%%%%%%%%%%%%%%%%%%%%%%%
%%%%%%%%%%%%%%%%%%%%%%%%%%%%%%%%%%%%%%%%%%%%%%%%%%%%%%%%%%%%%%%%%%%%
\subsection*{Acknowledgments}
\noindent
The author would like to thank Athanasios Chatzistavrakidis and
Yuji Hirota and  for useful comments and discussion.
This work was supported by the research promotion program for acquiring grants in-aid for scientific research(KAKENHI) in Ritsumeikan university.

%%%%%%%%%%%%%%%%%%%%%%%%%%%%%%%%%%%%%%%%%%%%%%%%%%%%%%%%%%%%%%%%%%%%%
%%%%%%%%%%%%%%%%%%%%%%%%%%%%%%   APPENDIX    %%%%%%%%%%%%%%%%%%%%%%%%
%%%%%%%%%%%%%%%%%%%%%%%%%%%%%%%%%%%%%%%%%%%%%%%%%%%%%%%%%%%%%%%%%%%%%
\appendix
\section{Geometry of Lie algebroid}\label{geometryofLA}
We summarize notation, formulas and their local coordinate expressions of geometry of a Lie algebroid.

Let $(E, \rho, [-,-])$ be a Lie algebroid over a smooth manifold $M$.
$x^i$ is a local coordinate of $M$ and $e_a \in \Gamma(E)$ is a basis of sections of $E$. 
$i,j$, etc. are indices on $M$ and $a,b$, etc. are indices on the fiber of $E$.
Local coordinate expressions of the anchor map and the Lie bracket are
$\rho(e_a) f = \rho^i_a(x) \partial_i f$,
$[e_a, e_b ] = C_{ab}^c(x) e_c$, where $\partial_i = \tfrac{\partial}{\partial x^i}$.
Then, the conditions of $\rho$ and $C$ are
\beqa 
&& \rho_a^j \partial_j \rho_{b}^i - \rho_b^j \partial_j \rho_{a}^i = C_{ab}^c \rho_c^i,
\label{LAidentity1}
\\
&& C_{ad}^e C_{bc}^d + \rho_a^i \partial_i C_{bc}^e + \mbox{Cycl}(abc) = 0.
\label{LAidentity2}
\eeqa

Let $\nabla$ be an ordinary connection on the vector bundle $E$.
An \textit{$E$-connection} ${}^E \nabla: \Gamma(TM) \rightarrow \Gamma(TM \otimes E^*) $ on the space of sections $\Gamma(TM)$ is defined by
\begin{eqnarray}
{}^E \nabla_{e} v &:=& \calL_{\rho(e)} v + \rho(\nabla_v e)
= [\rho(e), v] + \rho(\nabla_v e),
%{}^E \nabla_{e} e' &:=& \nabla_{\rho(e')} e + [e, e'],
\end{eqnarray}
%which can be regarded as a covariantization of the $E$-differential 
%${}^E \rd$,
where 
$e \in \Gamma(E)$ and $v \in \Gamma(TM)$.
%$e, e' \in \Gamma(E)$.
%The $E$-connection is also extended to the space of $E$-differential forms
%as a degree 1 differential operator.
%
The $E$-connection ${}^E \nabla$ satisfies 
\begin{eqnarray}
{}^E \rd \bracket{v}{\alpha} = 
\bracket{{}^E \nabla e}{\alpha} + \bracket{e}{{}^E \nabla \alpha},
%{}^E \nabla_{e} e' &:=& \nabla_{\rho(e')} e + [e, e'],
\end{eqnarray}
for a vector field $v$ and a $1$-form $\alpha$.
For a $1$-form $\alpha$, it is given by
\begin{eqnarray}
{}^E \nabla_{e} \alpha &:=& \calL_{\rho(e)} \alpha 
+ \bracket{\rho(\nabla e)}{\alpha}.
%{}^E \nabla_{e} e' &:=& \nabla_{\rho(e')} e + [e, e'],
\end{eqnarray}
$\omega = \omega^b_{ai} dx^i \otimes e^a \otimes e_b$ be 
a connection $1$-form. Then, local coordinate expressions of 
covariant derivatives and the $E$-covariant derivative are
\begin{eqnarray}
\nabla_i \alpha^a &=& \partial_i \alpha^a - \omega_{bi}^a \alpha^b,
\\
\nabla_i \beta_a &=& \partial_i \beta_a + \omega_{ai}^b \beta_b,
\\
%{}^E \nabla_{[a} \alpha_{b]} &=& \frac{1}{2} 
%\left[\rho^i_{[a} (\partial_i \beta_{b]} 
%- \omega_{b]i}^c \beta_c) - C_{ab}^c \beta_c \right].
{}^E \nabla_a v^i
&=&  \rho_a^j \partial_j v^i - \partial_j \rho^i_a v^j
- \rho^i_b \omega^b_{aj} v^j,
\\
{}^E \nabla_a \alpha_i
&=&  \rho_a^j \partial_j \alpha_i + \partial_i \rho^j_a \alpha_j
+ \rho^j_b \omega^b_{ai} \alpha_j.
\end{eqnarray}
An E-torsion, a curvature, an $E$-curvature and a basic curvature,
$T$, $R$, ${}^ER$ and $S$ are defined by
\beqa
R(s, s^{\prime}) &:=& [\nabla_s, \nabla_{s^{\prime}}] - \nabla_{[s, s^{\prime}]}, 
\nonumber \\
T(s, s^{\prime}) &:=& {}^E\nabla_s s^{\prime} - {}^E\nabla_{s^{\prime}} s
- [s, s^{\prime}],
\nonumber \\
{}^ER(s, s^{\prime}) &:=& [{}^E\nabla_s, {}^E\nabla_{s^{\prime}}] - {}^E\nabla_{[s, s^{\prime}]}, 
\nonumber \\
S(s, s^{\prime}) &:=& \calL_s (\nabla s^{\prime}) 
- \calL_{s^{\prime}} (\nabla s) 
- \nabla_{\rho(\nabla s)} s^{\prime} + \nabla_{\rho(\nabla s^{\prime})} s
\nonumber \\ &&
- \nabla[s, s^{\prime}] = (\nabla T + 2 \mathrm{Alt} \, \iota_\rho R)(s, s^{\prime}),
\nonumber
\eeqa
The following local coordinate expressions are given as
\beqa 
T_{ab}^c &\equiv& 
- C_{ab}^c + \rho_a^i \omega_{bi}^c - \rho_b^i \omega_{ai}^c,
\\
R_{ijb}^a &\equiv& 
\partial_i \omega_{aj}^b - \partial_j \omega_{ai}^b 
+ \omega_{ai}^c \omega_{cj}^b - \omega_{aj}^c \omega_{ci}^b,
\\
S_{iab}^{c} &\equiv& 
\nabla_i T_{ab}^c + \rho_b^j R_{ija}^c - \rho_a^j R_{ijb}^c,
 \nonumber \\
&=& - \partial_i C^c_{ab} + \omega_{di}^c C_{ab}^d - \omega_{ai}^d C_{db}^c - \omega_{bi}^d C_{ad}^c
+ \rho_a^j \partial_j \omega_{bi}^c
- \rho_b^j \partial_j \omega_{ai}^c
\nonumber \\ && 
+ \partial_i \rho_a^j \omega_{bj}^c
- \partial_i \rho_b^j \omega_{aj}^c
+ \omega_{ai}^d \rho_d^j \omega_{bj} ^c
- \omega_{bi}^d \rho_d^j \omega_{aj} ^c,
\eeqa
where the covariant derivative $\nabla_i T_{ab}^c$ is
\beqa 
\nabla_i T_{ab}^c &\equiv& 
\partial_i T_{ab}^c
- \omega_{di}^c T_{ab}^d + \omega_{ai}^d T_{db}^c + \omega_{bi}^d T_{ad}^c.
\eeqa
the $E$-curvature is given from the basic curvature as
\beqa 
{}^E R_{abc}^d &=& \rho_c^i S_{iab}^d.
\eeqa

%%%%%%%%%%%%%%%%%%%%%%%%%%%%%%%%%%%%%%%%%%%%%%%%%%%%%%%%%%%%%%%%%%%%%%
%\section{Formulas and identities}

\if0
%A torsion tensor is $t^k_{ij} = - \pi^{kl} H_{ijl}$
%since
\begin{eqnarray}
&& [\nabla_i, \nabla_j] f = 0.
%- t^k_{ij} \partial_k f = - \pi^{kl} H_{ijl} \partial_k f,
\\
&& [\nabla_i, \nabla_j] v_a = R_{ija}^b v_b,
%- t^l_{ij} \nabla_l v_k + R_{kji}^m v_m
%= - t^l_{ij} \nabla_l v_k - R_{kij}^m v_m,
\end{eqnarray}
%\footnote{$t = t^{(ref)}, R_{lij}^k = -R^{(ref)}{}^k_{ijl}$}
\fi

\if0
Note that $R$ and $S$ satisfies the following identities
\begin{eqnarray}
&& R_{ija}^b = - R_{jia}^b,
\\
%&& R_{[ijk]}^l = \nabla_{[i} t^l_{jk]} - t_{[ij}^m t_{k]m}^l,
%\\
&& S_{iab}^c = - S_{iba}^c,
%\\
%&& R_{abc}^d = - R_{acb}^d,
\end{eqnarray}
$R_{ija}^b$ satisfies the Bianchi identity,
\begin{eqnarray}
&& 
\nabla_{[i} R_{jk]a}^b = 0.
%t_{[ij}^n R_{l|k]n}^m.
\end{eqnarray}
We obtain the following covariantized identities of the Lie algebroid,
\begin{eqnarray}
&& [\rho^{\nabla}(e_1), \rho^{\nabla}(e_2)] = \rho([e_1, e_2]^{\nabla}),
\\
&& [[e_1, e_2]^{\nabla}, e_3]^{\nabla} + (123 \ \mbox{cyclic}) = 0,
\end{eqnarray}
where $e_1, e_2 \in \Gamma(E)$.
In the local coordinates, these identities gives
the following identities of $\rho$, $T$ and $R$,
\begin{eqnarray}
&& \rho_a^j \nabla_j \rho_b^i - \rho_b^j \nabla_j \rho_a^i - \rho_c^i T_{ab}^c =0,
\label{Bianchi11}
\\
&& \rho_c^i \nabla_i T_{ab}^e - T_{ab}^d T_{cd}^e
- \rho_b^i \rho_c^j R_{ija}^e + (abc \ \mbox{cyclic}) = 0,
\label{Bianchi12}
\\
&& \rho^i_c S_{iab}^c + \rho_c^i \nabla_i T_{ab}^e - 2T_{ab}^d T_{cd}^e
+ (abc \ \mbox{cyclic}) = 0.
\label{Bianchi13}
\end{eqnarray}
Equation \eqref{Bianchi12} is equivalent to the following identity,
\begin{eqnarray}
&& \rho^i_c S_{iab}^c - T_{ab}^d T_{cd}^e
+ \rho_c^i \rho_a^j R_{ijb}^e + (abc \ \mbox{cyclic}) = 0.
\end{eqnarray}

The Bianchi identity of $S_{iab}^c$ is obtained as
\begin{eqnarray}
&& \rho_{[a|}^j \nabla_{j} S_{i|bc]}^d + T_{[ab|}^e S_{ie|c]}^d
- T_{[a|e}^d S_{i|bc]}^e + (\nabla_i \rho^j_{[a|}) S_{j|bc]}^{d} 
%+ \mathrm{Cycl}(ijl) 
= 0.
\label{BianchiofS}
\end{eqnarray}
Equation \eqref{BianchiofS} is proved using Equations
\eqref{Bianchi11}--\eqref{Bianchi13}, 
and the following identity,
\begin{eqnarray}
&& [\nabla_i, \nabla_j]T_{ab}^c
= - R_{ijd}^c T_{ab}^d + R_{ija}^d T_{db}^c + R_{ijb}^d T_{ad}^c.
\end{eqnarray}
\fi

\if0
For any tensor $w_{ij} = \nabla_i v_j$
\begin{eqnarray}
&& (\pi^{il} \nabla_l \pi^{jm} \nabla_m 
- \pi^{jm} \nabla_m \pi^{il} \nabla 
- \pi^{lm} T^{ij}_l \nabla _m) w_{kn}
= \pi^{il} \pi^{jm}(R_{kml}^a w_{an} + R_{nml}^a w_{ka}), ??
\end{eqnarray}
\fi

\if0
\begin{eqnarray}
&& \rho_{[a|}^j \nabla_{j} S_{i|bc]}^d + T_{[ab|}^e S_{ie|c]}^d
+ T_{e[a|}^d S_{i|bc]}^{e} - T_{[a|e}^d S_{i|bc]}^e
%+ \mathrm{Cycl}(ijl) 
= 0.???
\label{BianchiofS}
\end{eqnarray}
\fi

\if0
The Bianchi identity of $R_{abc}^d$ is
\begin{eqnarray}
&& \rho_{[a|}^j \nabla_{j} R_{bcd]}^e 
+ \frac{1}{2} T_{[ab|}^f R_{f|cd]}^e
+ \frac{1}{2} T_{[ab|}^f R_{cd]f}^e
- T_{[ab}^f R_{c|f|d]}^e
%+ \mathrm{Cycl}(ijl) 
= 0.???
\label{BianchiofER}
\end{eqnarray}
\fi

%%%%%%%%%%%%%%%%%%%%%%%%%%%%%%%%%%%%%%%%%%%%%%%%%%%%%%%%%%%%%%%%%%%%%
%%%%%%%%%%%%%%%%%%%%%%%%%%%%%%   APPENDIX    %%%%%%%%%%%%%%%%%%%%%%%%
%%%%%%%%%%%%%%%%%%%%%%%%%%%%%%%%%%%%%%%%%%%%%%%%%%%%%%%%%%%%%%%%%%%%%
\section{Computation of Equation \eqref{PBconstraint}
}\label{CEq}
The derivation of Poisson bracket,
$\{G_{Ya}(\sigma), G_{Yb}(\sigma^{\prime}) \}_{PB}$
is slightly complicated.
Using Lie algebroid identities of $\rho^i_a$ and $C^a_{bc}$, 
identities of $J$ and $H$, we obtain
\begin{eqnarray}
&& \{G_{Ya}(\sigma), G_{Yb}(\sigma^{\prime}) \}_{PB} = 
\left[\left(\partial_i C_{ab}^c Y_c + \frac{(-1)^{n-1}}{n!} \partial_i 
J_{a b c_3\ldots c_{n+1}} A^{c_3} \wedge \ldots \wedge A^{c_{n+1}} \right) 
\wedge G_X^i
\right.
\nonumber \\ && \qquad
\left.
+ (-1)^{n-1} C_{ab}^c G_{Yc} 
+ \frac{(-1)^{n-2}}{(n-1)!} J_{a b c e_4\ldots e_{n+1}} A^{e_4} \wedge \ldots \wedge A^{e_{n+1}} \wedge G_A^c \right]^{(s)} (\sigma) 
\delta^n(\sigma - \sigma^{\prime})
\nonumber \\ && \qquad
+ \left[
\frac{(-1)^{n-1}}{(n+1)!} \rho^i_a \rho^j_b
H_{ij k_1 \ldots k_{n}} \rd X^{k_1}
\wedge \ldots \wedge \rd X^{k_{n}} 
\right.
\nonumber \\ && \qquad
\left.
+ \frac{(-1)^{n}}{(n+1)!} \rho^i_a \rho^j_b
H_{ij k_1 \ldots k_{n}} \rho^{k_{1}}_{c_{1}} A^{c_{1}} \wedge 
\ldots \wedge \rho^{k_{n}}_{c_{n}} A^{c_{n}} 
\right]^{(s)} (\sigma) 
\delta^n(\sigma - \sigma^{\prime}).
\label{PBconstraint2}
\end{eqnarray}
The final two terms are rewritten using the constraint $G_X^k$ as
\begin{eqnarray}
&& \left[
\frac{(-1)^{n-1}}{(n+1)!} \sum_{m=1}^n \rho^i_a \rho^j_b
H_{ij k_1\ldots k_m k_{m+1} \ldots k_{n}} \rd X^{k_1}\wedge \ldots
\wedge \rd X^{k_{m-1}} \wedge G_X^{k_{m}} 
\right.
\nonumber \\ && \qquad
\left.
\wedge  \rho^{k_{m+1}}_{c_{m+1}} A^{c_{m+1}} \wedge 
\ldots \wedge \rho^{k_{n}}_{c_{n}} A^{c_{n}} 
\right]^{(s)} (\sigma) 
\delta^n(\sigma - \sigma^{\prime}),
\end{eqnarray}
which gives Equation \eqref{PBconstraint}.

%%%%%%%%%%%%%%%%%%%%%%%%%%%%%%%%%%%%%%%%%%%%%%%%%%%%%%%%%%%%%%%%%%
%%%%%%%%%%%%%%%%%%%%%%%%%%%%%%%%%%%%%%%%%%%%%%%%%%%%%%%%%%%%%%%%%%
\newcommand{\bibit}{\sl}

%%%%%%%%%%%%%%%%%%%%%%%%%%%%%%%%%%%%%%%%%%%%%%%%%%%%%%%%%%%%%%%%%%%%%
%%%%%%%%%%%%%%%%%%%%%%%%%%%%%%%  Refs. %%%%%%%%%%%%%%%%%%%%%%%%%%%%%%
%%%%%%%%%%%%%%%%%%%%%%%%%%%%%%%%%%%%%%%%%%%%%%%%%%%%%%%%%%%%%%%%%%%%%
%\newpage
%NEW MACRO FOR BIBLIOGRAPHY

%\section*{References}
%\noindent

\end{document}